\newcommand{\diff}{\text{d}}
\renewcommand{\i}{\text{i}}
\newcommand{\tr}{\mathop{\text{tr}}}
\newtheorem{theorem}{Theorem}[section]
\newtheorem{lemma}[theorem]{Lemma}
\newtheorem{proposition}[theorem]{Proposition}
\newtheorem{conjecture}[theorem]{Conjecture}
\newenvironment{proof}[1][Proof:]{\begin{trivlist}
\item[\hskip \labelsep {\bfseries #1}]}{\end{trivlist}}
\newcommand{\qed}{\nobreak \ifvmode \relax \else
      \ifdim\lastskip<1.5em \hskip-\lastskip
      \hskip1.5em plus0em minus0.5em \fi \nobreak
      $\square$\fi}
\tikzset{->-/.style={decoration={
  markings,
  mark=at position #1 with {\arrow{>}}},postaction={decorate}}}
\tikzset{
  % style to apply some styles to each segment of a path
  on each segment/.style={
    decorate,
    decoration={
      show path construction,
      moveto code={},
      lineto code={
        \path [#1]
        (\tikzinputsegmentfirst) -- (\tikzinputsegmentlast);
      },
      curveto code={
        \path [#1] (\tikzinputsegmentfirst)
        .. controls
        (\tikzinputsegmentsupporta) and (\tikzinputsegmentsupportb)
        ..
        (\tikzinputsegmentlast);
      },
      closepath code={
        \path [#1]
        (\tikzinputsegmentfirst) -- (\tikzinputsegmentlast);
      },
    },
  },
  % style to add an arrow in the middle of a path
  mid arrow/.style={postaction={decorate,decoration={
        markings,
        mark=at position .625 with {\arrow[#1]{stealth}}
      }}},
}
\newcommand{\drawvertex}[3]
{
 
  \ifnum#1=1
   \draw[xshift=#2 cm,yshift=#3 cm,postaction={on each segment={mid arrow}}] (0,0)--
(.5,0)--(1,0);
  \draw[xshift=#2 cm,yshift=#3 cm,postaction={on each segment={mid arrow}}] (.5,-.
5)--(.5,0)--(.5,.5);
  \fi
   \ifnum#1=2
      \draw[xshift=#2 cm,yshift=#3 cm,postaction={on each segment={mid arrow}}] 
(1,0)--(.5,0)--(0,0);
  \draw[xshift=#2 cm,yshift=#3 cm,postaction={on each segment={mid arrow}}] (.
5,.5)--(.5,0)--(.5,-.5); 
  \fi
\ifnum#1=3
      \draw[xshift=#2 cm,yshift=#3 cm,postaction={on each segment={mid arrow}}] 
(1,0)--(.5,0)--(0,0);
  \draw[xshift=#2 cm,yshift=#3 cm,postaction={on each segment={mid arrow}}] (.5,-.
5)--(.5,0)--(.5,.5);   \fi
\ifnum#1=4
      \draw[xshift=#2 cm,yshift=#3 cm,postaction={on each segment={mid arrow}}] 
(0,0)--(.5,0)--(1,0);
  \draw[xshift=#2 cm,yshift=#3 cm,postaction={on each segment={mid arrow}}] (.
5,.5)--(.5,0)--(.5,-.5);  
  \fi
\ifnum#1=5
      \draw[xshift=#2 cm,yshift=#3 cm,postaction={on each segment={mid arrow}}] 
(0,0)--(.5,0)--(.5,.5);
  \draw[xshift=#2 cm,yshift=#3 cm,postaction={on each segment={mid arrow}}] (1,0)--
(.5,0)--(.5,-.5);  
  \fi
\ifnum#1=6
      \draw[xshift=#2 cm,yshift=#3 cm,postaction={on each segment={mid arrow}}] (.
5,.5)--(.5,0)--(0,0);
  \draw[xshift=#2 cm,yshift=#3 cm,postaction={on each segment={mid arrow}}] (.5,-.
5)--(.5,0)--(1,0);  
  \fi
}
\title{\large \bf The nineteen-vertex model and alternating sign matrices}
\date{}
\author{\normalsize  \sc{Christian Hagendorf}\medskip
\\
\normalsize \it
Universit\'e Catholique de Louvain\\
\normalsize \it Institut de Recherche en Math\'ematique et Physique\\
\normalsize \it Chemin du Cyclotron 2, 1348 Louvain-la-Neuve, Belgium
\medskip\\
\href{mailto:christian.hagendorf@uclouvain.be}{\normalsize 
\texttt{christian.hagendorf@uclouvain.be}}
}
\begin{document}
\maketitle
\begin{abstract}
It is shown that the transfer matrix of the inhomogeneous nineteen-vertex model with 
certain diagonal twisted boundary conditions possesses a simple eigenvalue. This is 
achieved through the identification of a simple and completely explicit solution of 
its Bethe equations.
The corresponding eigenvector is computed by means of the algebraic Bethe ansatz, 
and both a simple component and its square norm are expressed in terms of the Izergin-Korepin determinant. In 
the homogeneous limit, the vector coincides with a supersymmetry singlet of the 
twisted spin-one XXZ chain. It is shown that in a natural polynomial normalisation 
scheme its square norm and the simple component coincide with generating functions for weighted enumeration of alternating sign matrices.
\end{abstract}

\section{Introduction}
The study of integrable quantum spin chains relies often on the solution of the 
Bethe equations \cite{baxterbook,gaudin:83}. These are a set of non-linear coupled 
equations for the so-called Bethe roots, involving rational, trigonometric or even 
elliptic functions depending to the case at hand. Their solutions allow to build 
eigenstates of the Hamiltonian, and are thus in principle the starting point for the 
calculation of physically relevant quantities such as correlation functions. 
However, solving these equations is in general a challenging problem. Especially in 
finite-size systems one has to recourse frequently to numerical solutions because 
the patterns of Bethe roots for the ground state of the system and low-lying excited 
states are typically very complicated. Simplifications occur in infinite systems 
where the Bethe equations can be transformed into integral equations for 
which various analytical solution methods exist.

It is therefore to be expected that non-trivial integrable models for which the 
exact finite-size Bethe roots for the ground state can be computed (explicitly) by analytical 
methods are rather scarce. One of the few examples is the spin$-1/2$ XXZ chain with 
anisotropy $\Delta=-1/2$. Indeed, the $\mathcal Q$-function, a polynomial whose 
roots coincide with the Bethe roots, was found exactly for the ground state of 
finite chains with and various boundary conditions: periodic, twisted and open 
\cite{fridkin:00,fridkin:00_2}. Extensions to the spin$-1/2$ XYZ along a particular 
line of couplings are known, and the corresponding $\mathcal Q$-function displays 
even remarkable relations to classically integrable equations \cite{bazhanov:05,bazhanov:06}. It seems however that in each of these cases the boundary 
conditions have to be fine-tuned: twist angles or boundary magnetic fields need to 
be adjusted to particular values, the system length has to be even or odd depending 
on the particular choice of boundary conditions, etc.

The purpose of this article is to present an example of a quantum integrable model 
for which the Bethe roots of a highly non-trivial eigenstate of the Hamiltonian can be exactly and explicitly determined in 
finite size. It is the spin-one XXZ chain \cite{zamolodchikov:81,fateev:81} with 
particular, fine-tuned twisted boundary conditions but \textit{arbitrary 
anisotropy}. In \cite{hagendorf:13}, this spin chain was shown to possess a 
supersymmetric structure on the lattice in certain (anti-)cyclic subsectors of its Hilbert space. 
The ground states in these sectors are the so-called supersymmetry singlets. We show 
that at least one of these singlets can be obtained from the Bethe ansatz when all 
the Bethe roots coincide. This case is known to be difficult to handle as it has to 
be defined through a suitable limiting procedure \cite{baxter:02} (at least within the framework of the coordinate Bethe ansatz). In order to 
circumvent this technical obstacle, we use the common trick to introduce 
inhomogeneities into the model in such a way that its integrability is preserved. 
While the notion of a local spin-chain Hamiltonian and lattice supersymmetry are 
absent in the inhomogeneous case, the transfer matrix of the corresponding nineteen-vertex
model remains a well-defined object to study. We 
identify a special boundary condition with a fine-tuned twist angle for which it 
possesses a simple eigenvalue with a corresponding eigenstate whose Bethe roots are 
shown to coincide simply with the inhomogeneity parameters.

The existence of explicit Bethe roots and a simple eigenvalue does of course not 
necessarily imply that the corresponding eigenstate is interesting. Yet, a look at 
the literature on the above-mentioned spin$-1/2$ XXZ and XYZ chains, and the vertex-
models which they are related to, shows that these states are in fact the ground 
states of the spin chains, and that in suitable normalisation their components 
display remarkable connections to problems of enumerative combinatorics, most 
importantly the enumeration of alternating sign matrices and plane partitions 
\cite{razumov:00,razumov:01,batchelor:01,razumov:10,mangazeev:10}. A fruitful 
approach to proving these properties is indeed the introduction of inhomogeneity 
parameters which allowed to analyse the eigenvectors in terms of the so-called 
quantum Knizhnik-Zamolodchikov system. Its polynomial solutions allow to determine 
various components, sum rules, and even exact finite-size correlation functions in 
the inhomogeneous case, and then take the homogeneous limit, see for example 
\cite{difrancesco:05_3,difrancesco:06,zinnjustin:13,cantini:12_1}. We show here that 
a similar rich structure can be found in the transfer-matrix eigenstate of the 
twisted inhomogeneous nineteen-vertex model corresponding to the simple eigenvalue. 
Furthermore, we present relations to problems of weighted enumerations of 
alternating sign matrices in the homogeneous limit. Our approach is based on the 
explicit construction of the eigenstate by means of the algebraic Bethe ansatz, and 
the analysis of its properties through known results on scalar products such as 
Slavnov's formula \cite{korepin:93,slavnov:89}. We show that with a suitable non-trivial normalisation convention a certain simple component and the square norm of the supersymmetry singlet 
coincide with generating functions for the weighted enumeration of alternating sign 
matrices.

The layout of this article is the following. We start in section 
\ref{sec:defspinchain} with a discussion of the quantum spin-one XXZ chain, recall 
briefly its lattice supersymmetry and state our results about its supersymmetry 
singlets. Section \ref{sec:19v} is a reminder on the construction of the
nineteen-vertex model from the fusion procedure. We prove the existence of a simple 
eigenvalue of the transfer matrix and thus of the spin-chain Hamiltonian in section 
\ref{sec:aba}. Starting from elementary properties of the corresponding eigenvector 
in section \ref{sec:eigenvector} we prove a relation between the square norm of the 
inhomogeneous eigenstate and the so-called Izergin-Korepin determinant, and use it to find a closed expression for a particular component of the vector. The evaluation 
of their homogeneous limit yields a relation between the 
supersymmetry singlet and alternating sign matrices. We present our conclusions in 
section \ref{sec:concl}.

\section{The spin-one XXZ chain}
\label{sec:defspinchain}

The purpose of this section is to recall the definition of the spin-one XXZ chain with diagonal 
twisted boundary conditions as well as its supersymmetry properties, and then 
state our results about a particular supersymmetry singlet and its relation to the 
enumeration of alternating sign matrices.

\paragraph{Hilbert space and spin operators.} The Hilbert space of the quantum spin chain with $N$ 
sites is given by
\begin{equation}
  V = V_1\otimes V_2 \otimes \cdots \otimes V_N,
  \label{eqn:hilbertspace}
\end{equation}
where every factor is a copy of the Hilbert space for a single spin one, $V_j \simeq \mathbb 
C^3$. We label the canonical basis vectors as follows
\begin{equation*}
  |{\Uparrow}\rangle =
  \left(
  \begin{array}{c}
  1\\ 0 \\ 0
  \end{array}
  \right),\quad  |0\rangle=
  \left(
  \begin{array}{c}
  0\\ 1 \\ 0
  \end{array}
  \right),\quad
   |{\Downarrow}\rangle=
  \left(
  \begin{array}{c}
  0\\ 0 \\ 1
  \end{array}
  \right).
\end{equation*}
The most simple choice of a basis of $V$ is the set of vectors $|\sigma_1,\dots,
\sigma_N\rangle = \bigotimes_{j=1}^N |\sigma_j\rangle$ where $\sigma_j =\, \Uparrow,
0,\Downarrow$ for all $j=1,\dots,N$.

The spins themselves are described by the spin-one representation of $\mathfrak{su}
(2)$ which is given by
\begin{equation*}
   s^{1} = \frac{1}{\sqrt{2}}
   \left(
   \begin{array}{ccc}
   0 & 1 & 0\\
   1 & 0 & 1\\
   0 & 1 & 0
   \end{array}
   \right), \quad
   s^{2} = \frac{1}{\sqrt{2}}
   \left(
   \begin{array}{ccc}
   0 & -\i & 0\\
   \i & 0 & -\i\\
   0 & \i & 0
   \end{array}
   \right),\quad
   s^{3} = 
   \left(
   \begin{array}{ccc}
   1 & 0 & 0\\
   0 & 0 & 0\\
   0 & 0 & -1
   \end{array}
   \right).
\end{equation*}
As usual, we denote by $s_j^a$ the operator $s^a$ acting on the $j$-th factor of the 
tensor product \eqref{eqn:hilbertspace}. The total magnetisation is given by the 
operator $\Sigma = \sum_{j=1}^N s_j^3$. It is diagonal in the canonical basis. 
Moreover, since we are going to consider periodic systems, it will be useful to 
introduce a shift operator which translates the system by one site $S:V_1\otimes V_2 
\otimes \cdots \otimes V_N\mapsto V_N\otimes V_1 \otimes \cdots \otimes V_{N-1}$, 
and hence transforms the spins according to $S s^a_j S^{-1} = s^a_{j+1}$ for 
$j=1,\dots,N-1$, and $S s^a_N S^{-1} = s^a_{1}$.

\paragraph{Hamiltonian.} The Hamiltonian of the spin chain considered in this 
article is given by \cite{zamolodchikov:81,fateev:81}
\begin{subequations}
\begin{equation}
  H = \sum_{j=1}^N \left(\sum_{a=1}^3 J_a (s_j^a s_{j+1}^a+ 2(s_j^a)^2) - 
\sum_{a,b=1}^3 A_{ab}s_j^as_j^b s_{j+1}^a s_{j+1}^b\right),
\end{equation}
where $A$ is a symmetric matrix $A_{ab}=A_{ba}$ with diagonal elements $A_{aa}=J_a$. 
The remaining constants depend only on a single parameter $x$ which measures the 
anisotropy of the spin chain. They are given by
\begin{equation}
  J_1 = J_2 =1,\, J_3 = \frac{1}{2}(x^2-2),\quad  A_{12}=1,\, A_{13}=A_{23}=x-1.
\end{equation}
 \label{eqn:spin1XXZ}%
\end{subequations}
The Hamiltonian can be derived from an integrable vertex model which results from 
fusion of the six-vertex model as we shall see below. It is called the integrable spin-one XXZ 
chain as the derivation is similar to the way the standard spin$-1/2$ XXZ chain can 
be obtained from the six-vertex model.
Let us mention some special cases for which the Hamiltonian simplifies. At $x=2$, 
the Hamiltonian describes the $SU(2)$-symmetric Babujian-Takhtajan spin 
chain \cite{babujian:82,babujian:83}. The point $x=0$ is closely related to the so-called 
supersymmetric $t-J$ model \cite{bares:90}. In the limit $x\to \infty$ the 
spin chain becomes Ising-like, and hence very easy to analyse. Except for this last 
case, the diagonalisation of the Hamiltonian is a non-trivial problem. Nonetheless 
it can actually be done by using the Bethe ansatz.

In order to characterise the spin chain completely, we need to specify its boundary 
conditions. In this article, we are going to investigate the following, so-called 
diagonal twisted boundary conditions:
\begin{align*}
  s_{N+1}^1 = \cos \phi\, s_1^1-\sin\phi\, s_1^2, \quad
s_{N+1}^2 = \sin \phi\, s_1^1+\cos\phi\, s_1^2,\quad \quad s_{N+1}^3=s_{1}^3.
\end{align*}
In fact, this corresponds to a simple rotation of the spin around axis 3 by the twist angle $\phi$ 
when going from site $N$ to site $1$. If $\phi=0$, the boundary conditions are 
periodic and lead to a translation invariance of the spin chain $[H,S]=0$. For non-zero twist angles, the system is however not invariant under translations. Yet, it 
is possible to introduce an appropriate notion of translation invariance by 
considering the modified translation operator $S'= S \Omega_N$. Here $\Omega_N$ is 
an operator acting on the last site, before the system is translated by one site. 
One verifies that $[H,S']=0$ for the boundary conditions given above, provided that
\begin{align}
  & \Omega=
  \left(
  \begin{array}{ccc}
    e^{\i\phi} & 0 & 0\\
    0 & 1 & 0\\
    0 & 0 & e^{-\i \phi}
  \end{array}
  \right).
\label{eqn:defOmega}
\end{align}
Eventually, it is easy to see that for any twist angle the Hamiltonian commutes with 
the total magnetisation along the third axis, $[H,\Sigma]=0$.

\paragraph{Supersymmetric sectors and singlets.} It can be shown that the 
Hamiltonian \eqref{eqn:spin1XXZ} with the twists \eqref{eqn:defOmega} has an exact 
lattice supersymmetry in certain subsectors of the Hilbert space. This means that 
there is an operator $Q$ with $Q^2=0$ such that $H$ can be written as the 
anticommutator
\begin{equation*}
  H=\{Q,Q^\dagger\}.
\end{equation*}
The special feature of $Q$ in the present case is that it increases the number of 
sites by one whereas $Q^\dagger$ decreases the length of the chain by one. The 
supersymmetry is therefore dynamic in the sense that the length of the chain 
changes through the action of its supercharges. The precise definition of $Q$ and further technical can be found in \cite{hagendorf:13}, and 
a recent, very concise and general discussion of dynamic lattice supersymmetry 
in (super)spin chains in \cite{meidinger:13}. For our purposes, it is sufficient to know that the subsectors of $V$ where the supersymmetry exists are the eigenspaces of the twisted translation operator $S'$ with 
eigenvalue $(-1)^{N+1}$. Hence they depend on the twist angle. Indeed, the given 
eigenvalue implies that $(S')^N$ needs to act like the identity which leads to the 
condition that every vector in the subsector has to be an eigenvector of the operator $
\Omega \otimes \Omega \otimes \cdots \otimes \Omega$ with eigenvalue one. Using the 
explicit form \eqref{eqn:defOmega} we find that this is possible if and only if $
\phi \Sigma$ is an integer multiple of $2\pi$. For example, this holds for periodic 
boundary conditions $\phi=0$, or for the case $\phi=\pi$ provided that one restricts
to subsectors where $\Sigma$ is an even integer.

The existence of a supersymmetric structure on certain subspaces of $V$ implies that 
within them the Hamiltonian is a positive definite operator. Its eigenvalues/energy levels 
are bounded from below by zero. If a state $|\Phi\rangle$ with $H|\Phi\rangle=0$ 
exists it is therefore automatically a ground state of the Hamiltonian in these 
subsectors. Such states are called supersymmetry singlets or simply zero-energy states, and 
annihilated by both the supercharge and its adjoint:
\begin{equation*}
  Q|\Phi\rangle = 0,\, Q^\dagger |\Phi\rangle =0.
\end{equation*}
The existence of such a supersymmetry singlet for \eqref{eqn:spin1XXZ} on chains of 
arbitrary length $N$, and arbitrary $x$ was observed for chains of small length with 
twist angle $\phi = \pi$ in \cite{frahm:11_2,hagendorf:13}.
Here, we prove this statement:
\begin{theorem}
For any $N>1$ and twist angle $\phi = \pi$ the Hamiltonian possesses a zero-energy state with 
zero total magnetisation in the subsector of the Hilbert space where the lattice 
supersymmetry exists.
\label{thm:existence}
\end{theorem}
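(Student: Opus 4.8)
The plan is to deduce Theorem~\ref{thm:existence} not by working directly with the supercharges of the spin chain, but from a statement about an integrable vertex model. Recall that $H$ arises, up to an additive constant and an overall factor, as the logarithmic derivative at a distinguished point of the transfer matrix $t(u)$ of the nineteen-vertex model obtained by fusion from the six-vertex model, and that inserting inhomogeneity parameters $u_1,\dots,u_N$ along the chain preserves integrability while the homogeneous point $u_1=\cdots=u_N$ recovers $H$ together with the twist $\Omega$ of \eqref{eqn:defOmega}. Since the $t(u)$ for different spectral parameters commute, it is enough to exhibit, for the fine-tuned twist, a single common eigenvector $|\Phi\rangle$ of the family $\{t(u)\}$ whose eigenvalue $\Lambda(u)$ has vanishing first logarithmic derivative at that distinguished point in the homogeneous limit: this forces $H|\Phi\rangle=0$, and positivity of $H$ on the supersymmetric subsector (guaranteed by the supersymmetric structure recalled above) then promotes $|\Phi\rangle$ to a genuine ground state and hence a supersymmetry singlet.

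Second, I would construct this eigenvector by the algebraic Bethe ansatz. Restricting to the sector of zero total magnetisation — which for a spin-one chain of length $N$ means applying $N$ creation operators $B(v_k)$ to the all-$|{\Uparrow}\rangle$ reference state — the eigenvalue $\Lambda(u)$ and the Bethe equations for the roots $v_1,\dots,v_N$ are the usual ones for the fused model with a diagonal twist. The central claim is that for a suitably fine-tuned twist the Bethe equations admit the completely explicit solution in which the roots coincide with the inhomogeneities, $v_k=u_k$ up to an additive shift fixed by the normalisation of the $R$-matrix. I would verify this by direct substitution: the twist phases $e^{\pm\i\phi}$ must be tuned precisely so that the products of shift functions on the two sides of each Bethe equation cancel when $v_k\to u_k$, and one checks that the value of $\phi$ forced by this cancellation is exactly the one whose homogeneous limit is $\phi=\pi$ in the $\Sigma=0$ sector.

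Third, I would pass to the homogeneous limit $u_j\to u_0$ and identify the resulting vector with the singlet of Theorem~\ref{thm:existence}. The apparent singularity caused by all Bethe roots collapsing onto one value is handled in the standard way (as in \cite{baxter:02}): keep the inhomogeneities generic, carry out the algebraic manipulations there, and only afterwards let $u_j\to u_0$, the inhomogeneous transfer-matrix eigenvalue equation being an identity between analytic functions. One then evaluates $\Lambda(u)$ on this solution, checks that its logarithmic derivative at the fusion point vanishes so that the $H$-eigenvalue is zero, notes that the vector has zero magnetisation by construction, and shows it lies in the eigenspace of the twisted shift $S'$ with eigenvalue $(-1)^{N+1}$ — the last point following from the shift-invariance of the root multiset $\{u_k\}$, which makes the Bethe state an eigenvector of $S'$ with a computable eigenvalue.

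The main obstacle, I expect, is twofold. First, one must show that the Bethe state built from the coincident roots does not vanish identically: degenerate Bethe roots are precisely the regime in which the ansatz can collapse, so an independent argument is needed — most naturally, exhibiting a nonzero component in the inhomogeneous model, where the roots are genuinely distinct, and then invoking continuity. Second, one must pin down the exact twist angle and normalisation shift for which $v_k=u_k$ really solves the Bethe equations and for which the homogeneous limit reproduces $\phi=\pi$; this is a careful but essentially bookkeeping computation with the fused $R$-matrix, and it is the step that makes the fine-tuning of the boundary condition in the statement unavoidable. Everything else — the commutativity of transfer matrices, positivity on the supersymmetric subsector, and the magnetisation counting — is standard or already recorded above.
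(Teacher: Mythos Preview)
Your proposal is correct and follows essentially the same route as the paper: exhibit the explicit Bethe-root solution $z_k=w_k$ for the inhomogeneous model at $\phi=\pi$, read off the transfer-matrix eigenvalue via fusion, and pass to the homogeneous limit to obtain a zero-energy state with $\Sigma=0$ in the $S'=(-1)^{N+1}$ sector, with non-vanishing handled by the explicit construction. One small correction: the logarithmic derivative of $\theta^{(2)}(z)$ at $z=1$ does not itself vanish---it equals $-2N/[q^2]$---and it is the additive shift $N$ in $H=N+\tfrac{[q^2]}{2}\,\partial_z\ln T^{(2)}|_{z=1}$ that cancels it to give zero energy.
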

The proof relies on an explicit construction of the eigenstate.
It is however important to stress that this might not be the only singlet. While this 
appears to be the case for most values of $x$, numerical studies of small systems suggests that 
there are special values for $x$, for instance $x=0$, where 
additional zero-energy states occur. We address the uniqueness problem in a more general setting in conjecture \ref{conj:uniqueness}. Moreover, we do not claim neither that the singlet is also the ground state 
when taking into account the full Hilbert space $V$. The exact diagonalisation of 
the Hamiltonian for small systems suggests that this might only be the case for large enough\footnote{The author would like to thank Robert Weston and Junye Yang for pointing this out to him.}  $x$ but a proof of this statement is beyond the scope of this article. 

\paragraph{From supersymmetry to combinatorics.} How does the zero-energy state look like? It is clear from the form of the 
Hamiltonian that we may choose its normalisation such that it is a polynomial in $x$ 
with non-zero constant term. We expand it in the canonical basis according to
\begin{equation*}
  |\Phi(x)\rangle = \sum_{\sigma \in \{\Uparrow,0,\Downarrow\}^N} 
\Phi_{\sigma_1\cdots\sigma_N}(x)|\sigma_1,\dots,\sigma_N\rangle.
\end{equation*}
The components $\Phi_{\sigma_1\cdots\sigma_N}(x)$ are all polynomials in $x$. For example, in the case of $N=3$ sites 
we obtain:
\begin{align}
  &\Phi_{\Uparrow 0 \Downarrow}(x)=\Phi_{\Downarrow 0 \Uparrow}(x)=  1,\nonumber\\
  & \Phi_{\Downarrow\Uparrow 0}(x) = \Phi_{0 \Downarrow\Uparrow}(x) = \Phi_{\Uparrow\Downarrow 0}(x) = \Phi_{0 \Uparrow\Downarrow}(x) =-1,\label{eqn:phiN3} \\
 &  \Phi_{000}(x)= x. \nonumber
\end{align}
In \cite{hagendorf:13} it was 
observed that some of these components, 
and in particular its square norm are given by generating functions for a certain 
type of weighted enumeration of alternating sign matrices. These are matrices 
with entries $-1,0,1$, and the rules that along each row and column the non-zero 
elements alternate in sign, with the first and last non-zero entry being $1$. For 
instance, all $3\times 3$ alternating sign matrices are given by
\begin{equation*}
\begin{array}{ccccccc}
\left(
\begin{smallmatrix}1&0&0\\0&1&0\\0&0&1\end{smallmatrix}
\right),
&
\left(
\begin{smallmatrix}1&0&0\\0&0&1\\0&1&0\end{smallmatrix}
\right),
&
\left(
\begin{smallmatrix}0&1&0\\1&0&0\\0&0&1\end{smallmatrix}
\right),
&
\left(
\begin{smallmatrix}0&1&0\\0&0&1\\1&0&0\end{smallmatrix}
\right),
&
\left(
\begin{smallmatrix}0&0&1\\1&0&0\\0&1&0\end{smallmatrix}
\right),
&
\left(
\begin{smallmatrix}0&0&1\\0&1&0\\1&0&0\end{smallmatrix}
\right),
&
\left(
\begin{smallmatrix}0&1&0\\1&\!-1\!&1\\0&1&0\end{smallmatrix}
\right).
\end{array}
\end{equation*}
A closed formula for the number of $N\times N$ alternating sign matrices was 
conjectured by Mills, Robbins and Rumsey, and later proved by Zeilberger 
\cite{zeilberger:96} (see \cite{bressoudbook} for an overview). A short proof 
borrowing methods from quantum integrability was subsequently found by Kuperberg 
\cite{kuperberg:96}. The enumeration problem can be refined as follows 
\cite{kuperberg:02}: assign a weight $t^k$ to all to every alternating matrix with 
exactly $k$ entries $-1$, and sum these weights for all $N\times N$ matrices. The 
result $A_N(t)$ is obviously a polynomial in $t$. For $N=3$ we see from the matrices 
shown here above that $A_3(t) = 6 + t$. An explicit expression for $A_N(t)$ with 
arbitrary $N$ in terms of a determinant formula can be found in \cite{robbins:00,behrend:12}. It implies in particular that $A_N(t)$ is a polynomial of degree $\lfloor (N-1)^2/4\rfloor$.

Let us compare the example for $N=3$ to the square norm of the vector 
\eqref{eqn:phiN3}. We use the \textit{real} scalar product $\langle \cdot |
\cdot \rangle$ on $V$, and therefore compute the norm $||\Phi(x)||^2 = \langle 
\Phi(x)|\Phi(x)\rangle$ according to
\begin{equation*}
  ||\Phi(x)||^2=\sum_{\sigma \in \{\Uparrow,0,\Downarrow\}^N} \Phi_{\sigma_1\cdots
\sigma_N}(x)^2.
\end{equation*}
For our example $N=3$, we find
\begin{equation*}
  ||\Phi(x)||^2 = 6 + x^2 = A_3(t=x^2).
\end{equation*}
This is not only coincidence for three sites. One verifies by exact diagonalisation 
of the Hamiltonian for small $N$ that in a suitable normalisation, where all components are polynomials in $x$ with integer coefficients, the 
square norm of $|\Phi(x)\rangle$ is indeed equal to $A_N(t=x^2)$.

To make this more precise, we have to fix the degree of the state as a polynomial in $x$ in order to avoid redundancies. It is clear that we may restrict ourselves to impose the degree of a specific component. We choose $\Phi_{\Uparrow\cdots\Uparrow\Downarrow\cdots\Downarrow}(x)$ for even $N=2n$, and $\Phi_{\Uparrow\cdots\Uparrow0\Downarrow\cdots\Downarrow}(x)$ for odd $N=2n+1$, and require them to be polynomials in $x$ of degree $\lfloor (n-1)^2/4\rfloor$ with the constant term being adjusted to
\begin{equation*}
      \Phi_{ \underset{n}{\underbrace{\Uparrow\cdots\Uparrow}}\underset{n}{\underbrace{\Downarrow\cdots\Downarrow}}}(x)= n!+O(x), 
\quad 
      \Phi_{\underset{n}{ \underbrace{\Uparrow\cdots\Uparrow}}0\underset{n}{\underbrace{\Downarrow\cdots\Downarrow}}}(x)= n!+O(x). 
\end{equation*}
With this convention it is no longer possible to multiply the state with arbitrary polynomials which would only generate common (and redundant) factors of the components. We claim that with this normalisation scheme all components of the singlet are polynomials in $x$ with integer coefficients. Moreover, the two special components are then given by
  \begin{equation}
    \Phi_{ \underset{n}{\underbrace{\Uparrow\cdots\Uparrow}}\underset{n}{\underbrace{\Downarrow\cdots\Downarrow}}}(x) = \Phi_{\underset{n}{ \underbrace{\Uparrow\cdots\Uparrow}}0\underset{n}{\underbrace{\Downarrow\cdots\Downarrow}}}(x) = A_n(x^2),
    \label{eqn:simplecomponents}
  \end{equation}
 and the square norm of the singlet takes the form
  \begin{equation}
    ||\Phi(x)||^2 = A_N(x^2).
    \label{eqn:sumrule}
  \end{equation}
  Here $A_N(t)$ is the generating function for the weighted enumeration of $N\times N$ alternating sign matrices with weight $t$ per entry $-1$.
  
The normalisation convention presented is different from the one in \cite{hagendorf:13} where a restriction on the leading coefficient of $|\Phi(x)\rangle$ as a 
polynomial in $x$ was imposed. The two conditions appear to be equivalent. However, the one stated here seems to be easier to prove.

\section{The nineteen-vertex model}
\label{sec:19v}
The spin-chain Hamiltonian \eqref{eqn:spin1XXZ} can be related to an integrable 
vertex model, the so-called nineteen-vertex model. The relation allows to study the 
spin chain with the help of tools from quantum integrability such as the algebraic Bethe 
ansatz. This is indeed the strategy which we pursue in order to prove the existence 
of the supersymmetry singlet. To this end, we need to recall the construction of 
the nineteen-vertex model through the so-called fusion procedure, and introduce 
furthermore the transfer matrices of the model with twisted boundary conditions and 
inhomogeneities. 

\paragraph{Fusion.} We use a parametrisation in terms of multiplicative spectral 
parameters, and make systematic use of the following abbreviation
\begin{equation*}
[z]=z-z^{-1}.
\end{equation*}
The fusion procedure \cite{kulish:81,kulish:82} allows to construct iteratively 
solutions of the Yang-Baxter equation $R^{(m,n)}(z)\in \text{End}(\mathbb C^{m+1} 
\otimes \mathbb C^{n+1}),\,m,n=1,2,\dots$, starting from $m=n=1$: they solve
\begin{equation}
  R_{12}^{(m,n)}(z/w)R_{13}^{(m,p)}(z) R_{23}^{(n,p)}(w)=
   R_{23}^{(n,p)}(w)R_{13}^{(m,p)}(z)R_{12}^{(m,n)}(z/w),
  \label{eqn:ybe}
\end{equation}
on the product space $\mathbb C^{m+1} \otimes \mathbb C^{n+1}\otimes \mathbb C^{p+1}
$. The indices $i,j$ of $R_{ij}^{(m,n)}(z)$ label the factors of the tensor product 
which the $R$-matrices act on. The simplest case $m=n=1$ corresponds $R$-matrix of 
the six-vertex model. Let us abbreviate the canonical basis of $\mathbb C^2$ by
\begin{equation*}
  |{\uparrow}\rangle =
  \left(
  \begin{array}{c}
  1\\0
  \end{array}
  \right), \quad |{\downarrow}\rangle =
  \left(
  \begin{array}{c}
  0\\1
  \end{array}
  \right).
\end{equation*}
Then, in the basis $\{|{\uparrow\uparrow}\rangle, |{\uparrow\downarrow}\rangle, |
{\downarrow\uparrow}\rangle, |{\downarrow\downarrow}\rangle\}$ of $\mathbb C^2 
\otimes \mathbb C^2$, we have
\begin{equation*}
  R^{(1,1)}(z) =
  \left(
    \begin{array}{cccc}
    [qz] & 0 & 0 & 0 \\
    0 & [z] & [q] & 0\\
    0 & [q] & [z] & 0\\
    0 & 0 & 0 & [qz]
    \end{array}
  \right),
\end{equation*}
which solves \eqref{eqn:ybe} with $m=n=p=1$. The matrix $R^{(1,1)}(z)$ degenerates 
at $z=q$ and $z=q^{-1}$ where it can be written in terms of the projectors $P^+$ and 
$P^-$ onto the symmetric and antisymmetric subspaces of $\mathbb C^2\otimes \mathbb 
C^2$:
\begin{equation*}
  R^{(1,1)}(z=q) = B P^+,\quad R^{(1,1)}(z=q^{-1}) = (-2[q]) P^-,
\end{equation*}
with the diagonal matrix $B = \text{diag}([q^2],2[q],2[q],[q^2])$. This simple 
observation allows to construct $R^{(1,2)}(z)$ from the evaluation of the Yang-Baxter
equation at the degeneration points. For example, setting $w= q^{-1}$ in the 
Yang-Baxter equation we see that $P_{23}^+R_{12}^{(1,1)}(qz)R_{13}^{(1,1)}(z)$ 
leaves stable the symmetric subspace of the second and third factor in the tensor 
product. This can be extended to the following decomposition of a product of $R$-matrices:
\begin{equation*}
  M_{23} R_{12}^{(1,1)}(qz)R^{(1,1)}_{13}(z)M_{23}^{-1} =
  \left(
  \begin{array}{cc}
    [qz]R_{1,(23)}^{(1,2)}(z) & 0\\
    \ast & [z][q^2 z]
    \end{array}
  \right).
\end{equation*}
The rows and columns of the matrix on the right-hand side are indexed by the symmetric 
and antisymmetric subspaces of $\mathbb C^2\otimes \mathbb C^2 = \text{Sym}^2 
\mathbb C^2 \oplus \bigwedge^2 \mathbb C^2$. 
Moreover, $M = \text{diag}(1/\sqrt{[q^2]},1/\sqrt{2[q]},1/\sqrt{2[q]},
1/\sqrt{[q^2]})$ is a diagonal matrix, whose introduction  leads to a symmetric 
matrix $R^{(1,2)}(z)$. If we identify the basis vectors of $\text{Sym}^2 \mathbb C^2 
$ with the basis vectors of $\mathbb C^3$ according to 
\begin{equation*}
  |{\Uparrow}\rangle = |{\uparrow\uparrow}\rangle,\quad |0\rangle = \frac{1}
{\sqrt{2}}(|{\uparrow\downarrow}\rangle + |{\downarrow\uparrow}\rangle),\quad |
{\Downarrow}\rangle = |{\downarrow\downarrow}\rangle,
\end{equation*}
then the $R^{(1,2)}(z)$ can be evaluated in compact form. Indeed, one shows that in 
the basis $\{|{\uparrow{\Uparrow}}\rangle,|{\uparrow 0}\rangle,|
{\uparrow{\Downarrow}}\rangle,|{\downarrow{\Uparrow}}\rangle,|{\downarrow 
0}\rangle,|{\downarrow {\Downarrow}}\rangle\}$ it is given by
\begin{equation}
  R^{(1,2)}(z)
  =\left(
  \begin{array}{cccccc}
  [q^2 z] & 0 & 0 & 0 & 0 & 0\\
  0 & [q z] & 0 & \sqrt{[q][q^2]} & 0 & 0\\
  0 & 0 & [z] & 0 & \sqrt{[q][q^2]} & 0\\
  0 & \sqrt{[q][q^2]} & 0 & [z] & 0 & 0\\
  0 & 0 & \sqrt{[q][q^2]} & 0 & [q z] & 0 \\
  0 & 0 & 0 & 0 & 0 & [q^2 z]
  \end{array}
  \right).
  \label{eqn:r12}
\end{equation}
One verifies that it solves the Yang-Baxter equation \eqref{eqn:ybe} with 
$m=n=1,\,p=2$. In fact, it is known that $R^{(1,m)}(z)$ for all $m=1,2,\dots$ can be 
written down systematically in terms of the generators of the quantum group $U_{q}
(\mathfrak{sl}_2)$ \cite{kirillov:87,yung:95}. The non-zero matrix elements of 
$R^{(1,2)}(q^{-1}z)$ can be interpreted as statistical weights for a vertex model on 
the square lattice whose horizontal edges are always oriented, whereas the vertical 
edges may be either oriented or not. The structure of \eqref{eqn:r12} 
leaves us with the configurations of a mixed vertex model, shown in figure 
\ref{fig:verticesR12}.
\begin{figure}[h]
  \centering
  \begin{tikzpicture}
     \draw[postaction={on each segment={mid arrow}}] (0,0) -- (.5,0) -- (1,0);
     \draw[postaction={on each segment={mid arrow}},double] (.5,-.5) -- (.5,0) --(.
5,.5);
     \begin{scope}[xshift=1.5cm]
    \draw[postaction={on each segment={mid arrow}}] (1,0) -- (.5,0) -- (0,0);
     \draw[postaction={on each segment={mid arrow}},double] (.5,.5) -- (.5,0) -- (.
5,-.5) ;
    \end{scope}
    
    \begin{scope}[xshift=3.5cm]
       \draw[postaction={on each segment={mid arrow}}] (0,0) -- (.5,0) -- (1,0);
     \draw[postaction={on each segment={mid arrow}},double]  (.5,.5) -- (.5,0) --(.
5,-.5);
    \end{scope}
    \begin{scope}[xshift=5cm]
    \draw[postaction={on each segment={mid arrow}}] (1,0) -- (.5,0) -- (0,0);
     \draw[postaction={on each segment={mid arrow}},double] (.5,-.5)  -- (.5,0) --  
(.5,.5);

    \end{scope}
    
     \begin{scope}[xshift=7cm]
     \draw[postaction={on each segment={mid arrow}}] (0,0) -- (.5,0) -- (1,0);
     \draw[densely dotted ,double]  (.5,.5)  --(.5,-.5);
    \end{scope}
    \begin{scope}[xshift=8.5cm]
    \draw[postaction={on each segment={mid arrow}}] (1,0) -- (.5,0) -- (0,0);
     \draw[densely dotted,double] (.5,-.5)  -- (.5,0) --  (.5,.5);

    \end{scope}
    
    \begin{scope}[yshift=-2.5cm,xshift=2.cm]
     \draw[postaction={on each segment={mid arrow}}] (.5,0) -- (0,0);
     \draw[postaction={on each segment={mid arrow}}] (.5,0) -- (1,0);  
     \draw[densely dotted,double] (.5,0)  -- (.5,.5);
     \draw[postaction={on each segment={mid arrow}},double] (.5,-.5) -- (.5,0);  
     
     \begin{scope}[xshift=1.5cm]
     \draw[postaction={on each segment={mid arrow}}] (0,0) -- (.5,0);
     \draw[postaction={on each segment={mid arrow}}] (1,0) -- (.5,0);  
     \draw[densely dotted,double] (.5,0)  -- (.5,.5);
     \draw[postaction={on each segment={mid arrow}},double] (.5,0) -- (.5,-.5);  
     \end{scope}
     
     \begin{scope}[xshift=3cm]
     \draw[postaction={on each segment={mid arrow}}]   (.5,0)-- (0,0) ;
     \draw[postaction={on each segment={mid arrow}}]   (.5,0)-- (1,0);  
     \draw[densely dotted,double] (.5,0)  -- (.5,-.5);
     \draw[postaction={on each segment={mid arrow}},double] (.5,.5) -- (.5,0);  
     \end{scope}
     
     \begin{scope}[xshift=4.5cm]
    \draw[postaction={on each segment={mid arrow}}] (0,0) -- (.5,0);
     \draw[postaction={on each segment={mid arrow}}] (1,0) -- (.5,0); 
     \draw[densely dotted,double] (.5,0)  -- (.5,-.5);
     \draw[postaction={on each segment={mid arrow}},double] (.5,0) -- (.5,.5);  
     \end{scope}

    \end{scope}
    \draw [decoration={brace,mirror,raise=0.7cm},decorate] (0,0) -- (2.5,0); 
    \draw [decoration={brace,mirror,raise=0.7cm},decorate,xshift=3.5cm] (0,0) -- 
(2.5,0); 
    \draw [decoration={brace,mirror,raise=0.7cm},decorate,xshift=7cm] (0,0) -- 
(2.5,0); 
    
    \draw (1.25,-1.1) node {$[qz]$};
    \draw [xshift=3.5cm] (1.25,-1.1) node {$[q^{-1}z]$};
    \draw [xshift=7cm] (1.25,-1.1) node {$[z]$};
    
    \draw [decoration={brace,mirror,raise=0.7cm},decorate] (2,-2.5) -- (7.5,-2.5); 
    \draw (4.75,-3.6) node {$\sqrt{[q][q^2]}$};
  \end{tikzpicture}   
  \caption{Vertices with non-zero weights of the mixed vertex model derived from the 
matrix $R^{(1,2)}(q^{-1}z)$. The spin components $\uparrow$ and $\downarrow$ are 
represented by simple arrows along the horizontal direction, oriented to the right 
and left respectively. Along the vertical direction, the spin components $\Uparrow, 
0, \Downarrow$ are represented by upwards oriented lines, dotted lines without 
orientation or downwards oriented lines. In order to reconstruct the matrix 
elements, the pictures have to be read from south-west to north-east by following 
the lines. For example, from the first vertex in the bottom row we find $
\langle{\uparrow} 0|R^{(1,2)}(q^{-1} z)|{\downarrow\Uparrow}\rangle = \sqrt{[q]
[q^2]}$.}
  \label{fig:verticesR12}
\end{figure}
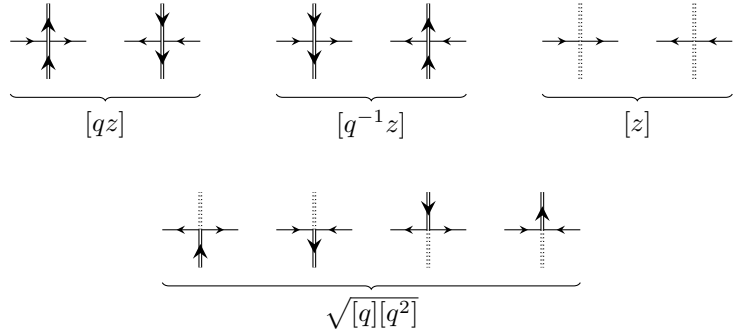

The $R$-matrix of the nineteen-vertex model is constructed by applying once more the 
decomposition into symmetric, and antisymmetric subspaces to a product of $R$-matrices $R^{(1,2)}(z)$:
\begin{equation}
  M_{12}R_{23}^{(1,2)}(z)R_{13}^{(1,2)}(q^{-1}z)^{(1,2)}M_{12}^{-1} = 
\left(\begin{array}{cc}
    R_{(12),3}^{(2,2)}(z) & 0\\
    \ast & [q^{-1}z][q^2 z]
    \end{array}
  \right).
  \label{eqn:fusionR12}
\end{equation}
The non-zero matrix elements of $R(z)=R^{(2,2)}(z)$ correspond to the weights of the 
configurations of the integrable nineteen-vertex model which are shown in figure 
\ref{fig:19vertexweights}. 
Each edge of the lattice is either oriented (with two possibilities 
for the orientation) or without orientation. The configurations of the model are constrained by the rule 
that at the number of edges oriented towards any vertex is equal to the number of 
edges oriented away from it.
\begin{figure}[h]
  \centering
  \begin{tikzpicture}[>=stealth]
   
    \draw[postaction={on each segment={mid arrow}},double] (0,0) to (0.5,0) to (.
5,.5);
    \draw[postaction={on each segment={mid arrow}}, double] (.5,-.5) to (.5,0) to 
(1,0);
    
    \begin{scope}[xshift=1.5cm]
    \draw[postaction={on each segment={mid arrow}},double] (.5,.5) to (0.5,0) to 
(0,0);
    \draw[postaction={on each segment={mid arrow}},double] (1,0) to (.5,0)  to (.
5,-.5);
    \end{scope}
    
    \begin{scope}[xshift=3.5cm]
    \draw[postaction={on each segment={mid arrow}},double] (0,0) to (.5,0) to (.5,-.
5);
    \draw[postaction={on each segment={mid arrow}},double]  (.5,.5) to (.5,0) to 
(1,0);
    \end{scope}
    
    \begin{scope}[xshift=5cm]
   \draw[postaction={on each segment={mid arrow}},double] (.5,-.5) to (.5,0) to 
(0,0) ;
    \draw[postaction={on each segment={mid arrow}},double]  (1,0) to (.5,0)to (.
5,.5) ;    \end{scope}
    
     \begin{scope}[xshift=7cm]
    \draw[postaction={on each segment={mid arrow}},double] (0,0) to (0.5,0) to (.
5,.5);
    \draw[postaction={on each segment={mid arrow}},double] (1,0) to (.5,0)  to (.
5,-.5);
    \end{scope}
    
     \begin{scope}[xshift=8.5cm]
    \draw[postaction={on each segment={mid arrow}},double] (.5,.5) to (0.5,0) to 
(0,0);
    \draw[postaction={on each segment={mid arrow}},double] (.5,-.5) to (.5,0) to 
(1,0);
    \end{scope}
    
    \begin{scope}[yshift=-2.5cm]
        \draw[densely dotted,double] (1,0) to (0,0);
        \draw[postaction={on each segment={mid arrow}},double] (.5,-.5) to (.5,0) to 
(.5,.5);

      \begin{scope}[xshift=1.5cm]
        \draw[densely dotted,double] (0,0) to (1,0);
        \draw[postaction={on each segment={mid arrow}},double] (.5,.5) to (.5,0) to 
(.5,-.5);
      \end{scope}
      
      \begin{scope}[xshift=3cm]
        \draw[densely dotted,double] (.5,.5) to (.5,-.5);
        \draw[postaction={on each segment={mid arrow}},double] (0,0) to (.5,0) to  
(1,0);
      \end{scope}
      
      \begin{scope}[xshift=4.5cm]
        \draw[densely dotted,double] (.5,-.5) to (.5,.5);
        \draw[postaction={on each segment={mid arrow}},double] (1,0) to (.5,0) to  
(0,0);
      \end{scope}
     
      \begin{scope}[xshift=6.5cm]
        \draw[densely dotted,double] (0,0) to (0.5,0) to (.5,.5);
        \draw[postaction={on each segment={mid arrow}},double] (.5,-.5) to (.5,0) to 
(1,0);
      \end{scope}
      
      \begin{scope}[xshift=8cm]
        \draw[densely dotted,double] (.5,.5) to (0.5,0) to (0,0);
        \draw[postaction={on each segment={mid arrow}},double] (1,0) to (.5,0)  to 
(.5,-.5);
      \end{scope}

      \begin{scope}[xshift=9.5cm]
        \draw[densely dotted,double] (1,0) to (.5,0)  to (.5,-.5);
        \draw[postaction={on each segment={mid arrow}},double] (.5,.5) to (0.5,0) to 
(0,0);
      \end{scope}
     
      \begin{scope}[xshift=11cm]
        \draw[densely dotted,double] (.5,-.5) to (.5,0) to (1,0);
        \draw[postaction={on each segment={mid arrow}},double] (0,0) to (0.5,0) to 
(.5,.5);
      \end{scope}
    \end{scope}
    
    \begin{scope}[yshift=-5cm]
      \draw[densely dotted,double] (.5,.5) to (0.5,0) to (1,0);
      \draw[postaction={on each segment={mid arrow}},double] (.5,-.5) to (.5,0)  to 
(0,0);         
 
      \begin{scope}[xshift=1.5cm]
        \draw[densely dotted,double] (.5,.5) to (0.5,0) to (1,0);
        \draw[postaction={on each segment={mid arrow}},double] (0,0) to (.5,0)  to 
(.5,-.5);
      \end{scope}
      
      \begin{scope}[xshift=3cm]
        \draw[densely dotted,double] (0,0) to (.5,0) to (.5,-.5);
        \draw[postaction={on each segment={mid arrow}},double] (.5,.5) to (.5,0) to 
(1,0);
      \end{scope}
      
      \begin{scope}[xshift=4.5cm]
        \draw[densely dotted,double]  (.5,-.5) to (.5,0) to (0,0);
        \draw[postaction={on each segment={mid arrow}},double] (1,0) to  (.5,0) to 
(.5,.5);
      \end{scope}

      \begin{scope}[xshift=8.75cm]
        \draw[densely dotted,double] (0,0) to (0.5,0) to (.5,.5);
        \draw[densely dotted,double] (.5,-.5) to (.5,0) to (1,0);
      \end{scope}
    \end{scope}

    \draw [decoration={brace,mirror,raise=0.7cm},decorate] (0,0) -- (2.5,0); 
    \draw [decoration={brace,mirror,raise=0.7cm},decorate,xshift=3.5cm] (0,0) -- 
(2.5,0); 
    \draw [decoration={brace,mirror,raise=0.7cm},decorate,xshift=7cm] (0,0) -- 
(2.5,0); 
    
    \draw (1.25,-1.1) node {$[qz][q^2 z]$};
    \draw [xshift=3.5cm] (1.25,-1.1) node {$[q^{-1}z][z]$};
    \draw [xshift=7cm] (1.25,-1.1) node {$[q][q^2]$};
    
    \draw [decoration={brace,mirror,raise=0.7cm},decorate] (0,-2.5) -- (5.5,-2.5); 
    \draw [decoration={brace,mirror,raise=0.7cm},decorate,xshift=6.5cm] (0,-2.5) -- 
(5.5,-2.5); 
    
    \draw (2.75,-3.6) node {$[z][qz]$};
    \draw (9.25,-3.6) node {$[q^2][qz]$};
    
    \draw [thick,decoration={brace,mirror,raise=0.7cm},decorate] (0,-5) -- (5.5,-5);     
    \draw (2.75,-6.1) node {$[q^2][z]$};
    \draw (9.25,-6.1) node {$[z][qz]+[q][q^2]$};
  
  \end{tikzpicture}
  \caption{Vertices and weights of the nineteen-vertex model: the components $
\Uparrow$ are represented by edges oriented to the right or top, $0$ by 
dotted double edges without orientation, $\Downarrow$ by edges pointing down or to the left. In order to 
reconstruct the matrix elements of $R(z)=R^{(2,2)}(z)$ one reads the configurations 
from south-west to north-east by following the lines. For example : $\langle 
0{\Uparrow}|R(z)|{\Uparrow}0\rangle=[q^2][qz]$.}
  \label{fig:19vertexweights}
\end{figure}
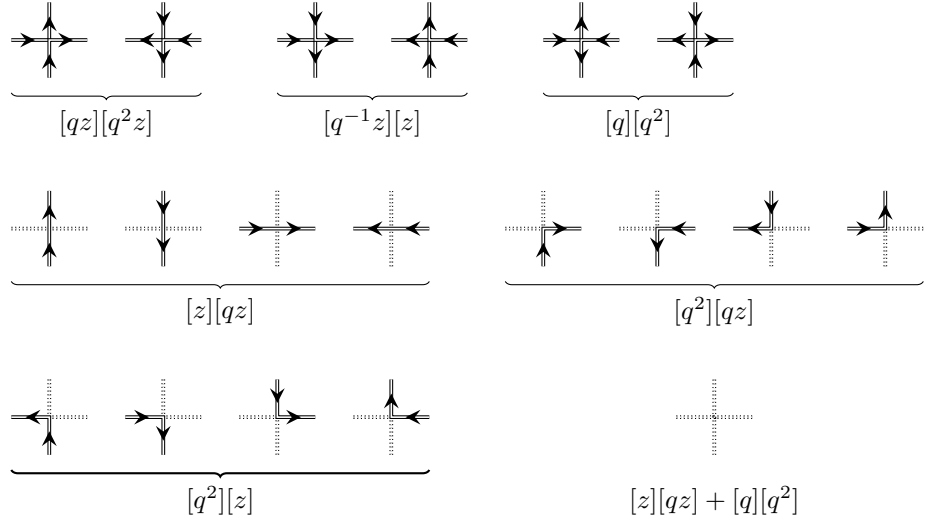

Before proceeding to the construction of the transfer matrices of the nineteen-vertex model, we mention a few properties of the $R$-matrices. One verifies that the 
$R$-matrices $R^{(1,1)}(z),\,R^{(1,2)}(z)$ and $R(z)=R^{(2,2)}(z)$ solve the Yang-
Baxter equation \eqref{eqn:ybe} for all admissible combinations. 
The $R$-matrix of the nineteen-vertex model is of special interest to us. If we set 
its argument to  $z=1$ then it reduces up to a factor to the permutation operator $P
$ on $\mathbb C^3 \otimes \mathbb C^3$, i.e. the operator defined through $P(|
v_1\rangle \otimes |v_2\rangle)= |v_2\rangle \otimes |v_1\rangle$:
\begin{equation}
  R(z=1) = [q][q^2]P.
  \label{eqn:permutation}
\end{equation}
The $R$-matrix is symmetric and therefore we have $P R(z) P = R(z)$. It is 
occasionally useful to use the abbreviation $\check R(z) = PR(z) = R(z)P$ which 
reduces up to a factor to the identity at $z=1$. Furthermore, one verifies by direct 
calculation that the so-called inversion relation holds
\begin{equation*}
  R(z)R(z^{-1}) = r(z)r(z^{-1}), \quad r(z) = [q/z][q^2z].
\end{equation*}
The fact that the left-hand side vanishes for certain values of $z$ means that $R(z)
$ cannot be invertible for all $z$. Like the six-vertex $R$-matrix $R^{(1,1)}(z)$ it 
has some degeneration points where it reduces to projectors on certain subspaces of 
$\mathbb C^3 \otimes \mathbb C^3$. A particular interesting point is $z=q^{-1}$ 
where its rank is one. We have
\begin{equation}
  R(z=q^{-1}) = [q][q^2]|s\rangle\langle s|,\quad |s\rangle = |{\Uparrow\Downarrow}
\rangle + |{\Downarrow\Uparrow\rangle}-|{00}\rangle.
  \label{eqn:rankone}
\end{equation}

\paragraph{Transfer matrices.} Let us consider a transfer matrix built from the $R$-matrix
$R^{(1,2)}(z)$:
\begin{equation}
  T^{(1)}(z) = {\tr}_{a} \left(\Omega_a^{(1)} R_{a,N}^{(1,2)}(q^{-1}z/w_j)\cdots 
R_{a,2}^{(1,2)}(q^{-1}z/w_2)R_{a,1}^{(1,2)}(q^{-1}z/w_1)\right).
  \label{eqn:defT12}
\end{equation}
Here $w_j,\,j=1,\dots, N$ is an inhomogeneity parameter attached to the $j$-th 
factor of the $N$-fold tensor product $V$. If they all take the same value then the 
model is called homogeneous, otherwise inhomogeneous. The trace is taken over an 
auxiliary (horizontal) space $\mathbb C^2$, labeled by $a$. The twist $\Omega^{(1)}$ 
denotes an operator acting on this auxiliary space. In this article, we consider the 
following diagonal case
\begin{equation}
  \Omega^{(1)} =
  \left(
    \begin{array}{cc}
    e^{\i \phi/2} & 0\\
    0 & e^{-\i \phi/2}
    \end{array}
    \right).
  \label{eqn:twists}
\end{equation}
This choice yields a phase according to the direction of the spin on the auxiliary 
space before taking the trace.

The fusion procedure outlined above allows to construct the transfer matrix of the 
nineteen-vertex model $T^{(2)}(z)$ from $T^{(1)}(z)$. It is defined as
\begin{equation*}
  T^{(2)}(z) = {\tr}_{a} \left(\Omega_a^{(2)}R_{a,N}(z/w_j)\cdots R_{a,2}(z/
w_2)R_{a,1}(z/w_1)\right),
\end{equation*}
where the auxiliary space labeled by $a$ is now $\mathbb C^3$. With the twist from
\eqref{eqn:twists}, the relation between the two types of transfer matrices is given 
by
\begin{equation}
  T^{(2)}(z) = T^{(1)}(z)T^{(1)}(q z)+(-1)^{N+1}\prod_{j=1}^N [qw_j/z][q^2z/w_j].
  \label{eqn:fusion}
\end{equation}
The derivation of this equation is based on the decomposition of the double trace 
over the two auxiliary spaces in the product $T^{(1)}(z)T^{(1)}(q z)$ into separate 
traces over the symmetric and antisymmetric subspaces of the product space, followed 
by the use of \eqref{eqn:fusionR12}. $\Omega^{(2)}$ is determined from $\Omega^{(1)}
$ by
\begin{equation*}
  \Omega^{(2)} = P^+ (\Omega^{(1)} \otimes \Omega^{(1)}) P^+
  =
  \left(
  \begin{array}{ccc}
    e^{\i\phi} & 0 & 0\\
    0 & 1 & 0\\
    0 & 0 & e^{-\i \phi}
  \end{array}
  \right),
\end{equation*}
where $P^+$ is the projector onto the symmetric subspace of $\mathbb C^2\otimes \mathbb C^2$. The result twist is therefore the same as in \eqref{eqn:defOmega}.

Let us now remind a few properties of the transfer matrices which are going to be 
relevant to our considerations. We emphasise here below the dependence on the 
inhomogeneity parameters $T^{(j)}(z) = T^{(j)}(z|w_1,\dots,w_N)$ for 
$j=1,2$. One of their most important features is that 
for a given set of inhomogeneities these matrices commute for different values of 
the spectral parameter:
\begin{equation*}
  [T^{(j)}(z|w_1,\dots,w_N),T^{(k)}(z'|w_1,\dots,w_N)] =0,
\end{equation*}
for all $z,z'$ and all choices for $j,k=1,2$.
The commutation relations imply in particular that 
the eigenvectors of these matrices are independent of the spectral parameter $z$. Furthermore, 
it is sufficient to diagonalise $T^{(1)}(z|w_1,\dots,w_N)$ in order to compute 
both the spectrum and the eigenvectors of the $T^{(2)}(z|w_1,\dots,w_N)$.
Another well-known consequence of the commutation relations is that it implies the 
existence of a large number of commuting operators which contain at the homogeneous point $w_1=\dots=w_N=1$ both the twisted translation operator and the Hamiltonian of 
the spin-one XXZ chain \eqref{eqn:spin1XXZ}. Indeed, using \eqref{eqn:permutation} 
one finds at the point $z=1$:
\begin{subequations}
\begin{align}
  S' = ([q][q^2])^{-N}T^{(2)}(z=1|w_1=1,\dots, w_N=1).
\end{align}
The Hamiltonian is obtained from the logarithmic derivative at the same point
\begin{align}
  H = N+ \frac{[q^2]}{2}\left.\frac{\diff}{\diff z} \ln T^{(2)}(z|w_1=1,\dots, 
w_N=1)\right|_{z=1},
\end{align}
provided that the parameter $x$ used in section \ref{sec:defspinchain} is identified 
with
\begin{equation}
  x= q+q^{-1}.
\end{equation}
\label{eqn:SHT}%
\end{subequations}
Hence, the diagonalisation of the transfer matrices allows to recover the 
eigenvalues and -vectors of the spin-chain Hamiltonian.

\section{Simple eigenvalue}
\label{sec:aba}
In this section, we show that the inhomogeneous transfer matrix of the nineteen-vertex model with a diagonal twist possesses a simple eigenvalue for the twist angle 
$\phi=\pi$. The observation is astonishingly simple: we show that the Bethe 
equations of the model admit an explicit solution in terms of the inhomogeneity 
parameters. Taking the homogeneous limit, we observe that it leads to a zero 
eigenvalue of the Hamiltonian.

\paragraph{Algebraic Bethe ansatz.} In the case of diagonal twists, the eigenvalues 
and -vectors of $T^{(1)}(z)$ and $T^{(2)}(z)$ can be constructed from the algebraic 
Bethe ansatz \cite{korepin:93,faddeev:96}. The basic object is the monodromy matrix 
of the inhomogeneous model for $N$ sites, given by
\begin{equation*}
  \mathcal T_a(z) = R_{a,N}^{(1,2)}(q^{-1}z/w_N)\cdots R_{a,1}^{(1,2)}(q^{-1}z/w_1).
\end{equation*}
It can be seen as a $2\times 2$ matrix acting on the auxiliary space whose entries 
are operators on the Hilbert space $V$:
\begin{equation*}
  \mathcal T_a(z) = \left(
  \begin{array}{cc}
  \mathcal A(z) & \mathcal B(z)\\
  \mathcal C(z) & \mathcal D(z)
  \end{array}
  \right)_a.
\end{equation*}
Comparing with \eqref{eqn:defT12} we conclude that
\begin{align*}
  T^{(1)}(z|w_1,\dots,w_N) &= {\tr}_{a} \Omega_a^{(1)}\mathcal T_a(z)= e^{\i \phi/2} 
\mathcal A(z) + e^{-\i \phi/2} \mathcal D(z).
\end{align*}  
The entries of the monodromy matrix satisfy a number of quadratic relations 
which are a result of $R_{a_1a_2}^{(1,1)}(z/w) \mathcal T_{a_1}(z) \mathcal T_{a_2}
(w) = \mathcal T_{a_2}(w)\mathcal T_{a_1}(z) R_{a_1a_2}^{(1,1)}(z/w)$, an immediate 
consequence of the Yang-Baxter equation \eqref{eqn:ybe}. Here, we quote only a few 
of them relevant to our discussion:
\begin{subequations} 
\begin{align}
  & \mathcal A(w)\mathcal B(z) = f(w,z)\mathcal B(z)\mathcal A(w)+ g(w,z) \mathcal 
B(w)\mathcal A(z),\\
  & \mathcal D(w)\mathcal B(z) = f(z,w)\mathcal B(z)\mathcal D(w)+ g(z,w) \mathcal 
B(w) \mathcal D(z),\\
  & [\mathcal B(z), \mathcal B(w)] = [\mathcal C(z),\mathcal C(w)]=0,
\end{align}
\label{eqn:commadb}%
where we abbreviate
\begin{equation*}
  f(z,w)=[q w/z]/[w/z], \quad \text{and}\quad  g(z,w) = [q]/[w/z].
\end{equation*}%
\end{subequations}

The algebraic Bethe ansatz is based on a reference state (pseudo vacuum) $|{\wedge}
\rangle$ defined through $\mathcal C(z)|{\wedge}\rangle=0$ for any $z$. From the 
explicit form of $R^{(1,2)}(z)$ it is not difficult to see that it is given by the 
completely polarised state
\begin{equation*}
  |{\wedge}\rangle = |{\Uparrow\Uparrow}\, \cdots\, {\Uparrow\Uparrow}\rangle.
\end{equation*}
Furthermore, inspecting \eqref{eqn:r12} one concludes that the action of the 
diagonal elements of the monodromy matrix on this state is very simple:
\begin{align*}
 \mathcal A(z)|{\wedge}\rangle = a(z)|{\wedge}\rangle, \quad a(z) = \prod_{j=1}^N 
[q z/w_j],\quad  \mathcal D(z)|{\wedge}\rangle = d(z)|{\wedge}\rangle, \quad d(z) 
= \prod_{j=1}^N [q^{-1}z/w_j].
\end{align*}
The eigenstates of $T^{(1)}(z)$ are constructed by acting with the operator $
\mathcal B(z)$ on the reference state. It is easy to see that $[\Sigma,\mathcal 
B(z)]=-\mathcal B(z)$: this operator flips spins and lowers the total magnetisation 
by one. An eigenvector with magnetisation $N-n$ is given by
\begin{equation}
  |\Psi(z_1,\dots,z_n)\rangle = \prod_{j=1}^n \mathcal B(z_j)|{\wedge}\rangle.
  \label{eqn:bethestate}
\end{equation}
As any two $\mathcal B$-operators commute according to \eqref{eqn:commadb} the 
expression is obviously symmetric in the $z_1,\dots,z_n$. These numbers are however 
not arbitrary. One shows that they lead to an eigenvector of $T^{(1)}(z)$  with 
eigenvalue
\begin{equation}
  \theta^{(1)}(z|w_1,\dots, w_n) = e^{\i \phi/2}a(z)\prod_{j=1}^n f(z,z_j) +e^{-\i 
\phi/2}d(z)\prod_{j=1}^n f(z_j,z)\label{eqn:betheeigenvalue},
\end{equation}
provided that the so-called Bethe equations hold. They are given by $a(z_k)/
d(z_k)=e^{-\i \phi} \prod_{j \neq k}^n f(z_j,z_k)/f(z_k,z_j)$, or more explicitly:
\begin{equation}
  \prod_{j=1}^N \frac{[q z_k/w_j]}{[q^{-1}z_k/w_j]} = e^{-\i \phi}\prod_{j\neq k}^n
\frac{[q z_k/z_j]}{[q^{-1}z_k/z_j]}, \quad k=1,\dots,n.
  \label{eqn:bae}
\end{equation}
The derivation of these statements is based on the repeated application of the 
quadratic relations \eqref{eqn:commadb} in order to evaluate $\mathcal A(z)|
\Psi(z_1,\dots,z_n)\rangle$ and $\mathcal D(z)|\Psi(z_1,\dots,z_n)\rangle$ 
\cite{korepin:93}. The Bethe equations result from the elimination of so-called 
unwanted terms which occur when commuting $\mathcal A(z),\mathcal D(z)$ through the 
operators $\mathcal B(z_j)$ in \eqref{eqn:bethestate}. The eigenvalue is obtained 
after this when acting with $\mathcal A(z),\mathcal D(z)$ on the reference state. 
Because of the relation between $T^{(1)}(z)$ and $T^{(2)}(z)$ this procedure leads 
automatically to an eigenvector of the transfer matrix for the nineteen-vertex 
model. The eigenvalue can be computed from \eqref{eqn:fusion}. In the homogeneous 
limit $w_1=\cdots=w_N=1$ it allows therefore to diagonalise the spin-chain 
Hamiltonian.

\paragraph{Simple eigenvalue.} Explicit solutions of the coupled algebraic equations 
\eqref{eqn:bae} with finite $N$ and $n$ are rather difficult to obtain. Yet, in the 
present case there is a remarkably simple solution, provided that the twist angle is 
chosen to be $\phi = \pi$ and the number of Bethe roots coincides with the number of 
sites $N=n$. It is given by
\begin{equation}
  z_k =  w_k, \quad k=1,\dots, N.
  \label{eqn:betheroots}
\end{equation}
Its insertion into \eqref{eqn:eigenvalue} leads to a vanishing eigenvalue $
\theta^{(1)}(z|z_1=w_1,\dots, z_N=w_N) = 0$ for all $z$. We apply the fusion 
equation \eqref{eqn:fusion}, and obtain the following result:
\begin{theorem}
  For the twist angle $\phi=\pi$ and any number of sites $N$, the transfer matrix of 
the nineteen-vertex model $T^{(2)}(z)$ possesses the eigenvalue
  \begin{equation}
    \theta^{(2)}(z|w_1,\dots,w_N) = (-1)^{N+1}\prod_{j=1}^N [q w_j/z][q^2 z/w_j].
    \label{eqn:eigenvalue}
  \end{equation}
\end{theorem}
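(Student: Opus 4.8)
The plan is to check that the proposed roots $z_k=w_k$ solve the Bethe equations \eqref{eqn:bae} for $\phi=\pi$ and $n=N$, to conclude from \eqref{eqn:betheeigenvalue} that the corresponding eigenvalue of $T^{(1)}(z)$ vanishes identically, and finally to read off the eigenvalue of $T^{(2)}(z)$ from the fusion relation \eqref{eqn:fusion}.

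First I would verify the Bethe equations. Substituting $z_j=w_j$ into the left-hand side of \eqref{eqn:bae}, the factor with $j=k$ equals $[q]/[q^{-1}]=-1$ since $[q^{-1}]=-[q]$, so the left-hand side reduces to $-\prod_{j\neq k}[qw_k/w_j]/[q^{-1}w_k/w_j]$. On the right-hand side $e^{-\i\pi}=-1$ multiplies exactly the same product. Hence both sides coincide and \eqref{eqn:betheroots} is a solution, for arbitrary inhomogeneities. One also has to make sure the corresponding Bethe vector $|\Psi(w_1,\dots,w_N)\rangle=\prod_{k=1}^N\mathcal B(w_k)|{\wedge}\rangle$ is non-zero, so that the algebraic Bethe ansatz indeed produces a genuine eigenvector of $T^{(1)}(z)$ rather than a vacuous statement; I expect to settle this separately (see below).

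Next I would substitute the solution into the eigenvalue formula \eqref{eqn:betheeigenvalue}. Using the elementary identities $[w_j/z]=-[z/w_j]$ and $[q^{-1}z/w_j]=-[qw_j/z]$, one checks that
\begin{equation*}
  a(z)\prod_{j=1}^N f(z,w_j)=d(z)\prod_{j=1}^N f(w_j,z)=(-1)^N\prod_{j=1}^N\frac{[qz/w_j]\,[qw_j/z]}{[z/w_j]}
\end{equation*}
as rational functions of $z$. Therefore the two terms in \eqref{eqn:betheeigenvalue} differ only through the scalar prefactors $e^{\pm\i\pi/2}=\pm\i$ and cancel, so that $\theta^{(1)}(z|w_1,\dots,w_N)=0$ for all $z$. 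Evaluating the fusion relation \eqref{eqn:fusion} on the Bethe vector then gives that $T^{(2)}(z)$ acts as $\theta^{(1)}(z)\,\theta^{(1)}(qz)+(-1)^{N+1}\prod_{j=1}^N[qw_j/z][q^2z/w_j]=(-1)^{N+1}\prod_{j=1}^N[qw_j/z][q^2z/w_j]$, which is \eqref{eqn:eigenvalue}.

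The algebra above is routine once the two displayed identities are in place, so the only genuinely delicate point is the non-vanishing of $|\Psi(w_1,\dots,w_N)\rangle$. I expect to handle this not here but in Section \ref{sec:eigenvector}, by exhibiting an explicit component of the Bethe vector which does not vanish (at least for generic inhomogeneities); this argument simultaneously prepares the combinatorial analysis of the eigenvector and its square norm.
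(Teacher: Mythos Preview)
Your argument is correct and follows exactly the route the paper takes: verify that $z_k=w_k$ solves the Bethe equations at $\phi=\pi,\,n=N$, deduce $\theta^{(1)}(z)\equiv 0$, and read off $\theta^{(2)}(z)$ from the fusion relation \eqref{eqn:fusion}. You simply make explicit the checks the paper leaves to the reader, and your additional remark about the non-vanishing of the Bethe vector (deferred to Section~\ref{sec:eigenvector}) is a sensible caveat that the paper handles only implicitly through its later analysis of the eigenvector.
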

This is in fact an inhomogeneous generalisation of the existence statements of 
theorem \ref{thm:existence}, appearing now as a mere corollary. Indeed, 
\eqref{eqn:eigenvalue} holds for any choice of the inhomogeneity parameters and 
therefore by continuity also for the homogeneous model. We conclude therefore that if 
$w_1=\dots= w_N=1$ the transfer matrix possesses a Bethe eigenvector whose Bethe 
roots condense to a single point $z_1=\dots=z_N=1$. Using the relation between the 
transfer matrix and the spin-chain Hamiltonian given in \eqref{eqn:SHT} we conclude 
that the corresponding eigenstate has zero energy, and is an eigenvector of the 
twisted translation operator with eigenvalue $(-1)^{N+1}$. Hence, it is a ground 
state of the Hamiltonian in the subsector of the Hilbert space where the 
supersymmetry exists. Moreover, from $n=N$ we conclude that the resulting state has 
zero magnetisation. This completes the proof of theorem \ref{thm:existence}.

As for the homogeneous case discussed in section \ref{sec:defspinchain}, we can at this point however not conclude that the eigenvalue non-degenerate even though the numerical diagonalisation for small system sizes suggests that this is the case:
\begin{conjecture}
  Except for possibly a finite number of values of the parameter $q$, the eigenvalue \eqref{eqn:eigenvalue} is non-degenerate for all $N$.
  \label{conj:uniqueness}
\end{conjecture}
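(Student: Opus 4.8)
The plan is to translate the non-degeneracy statement into a one-dimensionality statement for the zero eigenspace of the auxiliary transfer matrix $T^{(1)}(z)$, to reduce this — by a genericity (upper semicontinuity) argument — to a single value of $q$, and to settle that value in the Ising-like limit $x\to\infty$.

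\emph{Reduction.} The operators $T^{(1)}(z)$, $z\in\mathbb C^{*}$, form a commuting family whose joint generalised eigenspaces are those of $T^{(2)}(z)$, and the fusion relation \eqref{eqn:fusion} reads $T^{(2)}(z)-\theta^{(2)}(z)=T^{(1)}(z)T^{(1)}(qz)$. Hence on the joint generalised eigenspace attached to a joint eigenvalue $\mu(\cdot)$ of $\{T^{(1)}(z)\}$ the operator $T^{(2)}(z)$ has the single eigenvalue $\mu(z)\mu(qz)+\theta^{(2)}(z)$; since $\mu$ is a Laurent polynomial in $z$, this equals $\theta^{(2)}(z)$ identically if and only if $\mu\equiv 0$. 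Therefore the generalised eigenspace $G$ of $T^{(2)}(z)$ for the eigenvalue \eqref{eqn:eigenvalue} coincides with the joint generalised zero-eigenspace of $\{T^{(1)}(z)\}$; it contains the Bethe vector $\prod_{j=1}^{N}\mathcal B(w_{j})|{\wedge}\rangle$ of \eqref{eqn:betheroots} (for which $\theta^{(1)}\equiv 0$ by \eqref{eqn:betheeigenvalue} and the identity $[q^{-1}z/w_{j}]=-[q w_{j}/z]$), so $\dim G\ge 1$, and the conjecture is equivalent to $\dim G=1$. As a consistency check, comparing the zeros of the two sides of $a(z)\prod_{j}f(z,z_{j})=d(z)\prod_{j}f(z_{j},z)$ shows that $\theta^{(1)}\equiv 0$ forces the number of Bethe roots to be $n=N$ and their squares to satisfy $\{z_{j}^{2}\}=\{w_{j}^{2}\}$, which is at least compatible with \eqref{eqn:betheroots} being the unique admissible solution of \eqref{eqn:bae} with vanishing level-one eigenvalue.

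\emph{Genericity.} Fix $N$. The entries of $T^{(2)}(z)$, read off figure \ref{fig:19vertexweights}, and the eigenvalue \eqref{eqn:eigenvalue} are Laurent polynomials in $z$ and $q$ whose degree in $z$ is bounded in terms of $N$ alone. Consequently $G=G(q)$ is the kernel of a single matrix $\widehat T(q)$ — assembled from $T^{(2)}(z)-\theta^{(2)}(z)$ at finitely many values of $z$, raised to a power bounded by $\dim V$ — whose entries are Laurent polynomials in $q$, so $\{q\in\mathbb C^{*}:\dim G(q)\ge 2\}$ is a Zariski-closed subset of $\mathbb C^{*}$. It is therefore either all of $\mathbb C^{*}$ or finite, and it suffices to produce one $q_{0}$ with $\dim G(q_{0})=1$; this dichotomy is exactly why the exceptional set in Conjecture \ref{conj:uniqueness} is finite.

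\emph{The Ising limit, and the main obstacle.} Take $q\to\infty$, equivalently $x=q+q^{-1}\to\infty$. Dividing the weights of figure \ref{fig:19vertexweights} by $[q][q^{2}]$, only the ``frozen'' configurations survive — those whose rescaled weights tend to $z^{2}$, $1$, $z$, $1$ — and the limiting transfer matrix becomes a combinatorial object that should be explicitly diagonalisable, with the limit of \eqref{eqn:eigenvalue} a simple eigenvalue. Since a simple eigenvalue of a limit of matrices is simple for all nearby parameter values, and since \eqref{eqn:eigenvalue} is an eigenvalue of $T^{(2)}(z)$ for every $q$, this would give $\dim G(q_{0})=1$ for all sufficiently large $q_{0}$ and hence, via the dichotomy above, prove the conjecture. \emph{The main obstacle is precisely here}: one has to control the $q\to\infty$ limit of $T^{(2)}(z)$ finely enough to check that \eqref{eqn:eigenvalue} does not merge with degenerate partners in the limit, i.e. to exclude that the closed set of the previous paragraph is all of $\mathbb C^{*}$. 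A heavier alternative would bypass the limit by invoking completeness of the algebraic Bethe ansatz for $T^{(1)}(z)$: writing $R^{(1,2)}(q^{-1}z/w_{j})=P^{+}_{j}R^{(1,1)}_{a,2j}(z/w_{j})R^{(1,1)}_{a,2j-1}(z/(w_{j}q))P^{+}_{j}$ realises $T^{(1)}(z)$ as a six-vertex transfer matrix on $(\mathbb C^{2})^{\otimes 2N}$ with inhomogeneities paired as $(w_{j},w_{j}q)$ and twist $\text{diag}(\i,-\i)$, restricted to $\bigotimes_{j}\text{Sym}^{2}$; a completeness statement there, together with the uniqueness of \eqref{eqn:betheroots}, would identify $G$ with the one-dimensional span of the explicit Bethe vector — but standard completeness theorems do not obviously cover this non-generic, paired configuration with the $\phi=\pi$ twist, so a self-contained argument along the lines of the Ising limit seems to be the way forward.
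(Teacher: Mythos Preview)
The paper does not prove this statement: it is explicitly labelled a conjecture, supported only by numerical diagonalisation for small $N$, and later \emph{assumed} (in the spin-reversal argument of section~\ref{sec:simplecomponents}). There is therefore no proof in the paper to compare against.

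Your strategy is sound in outline. The reduction of the $T^{(2)}$ degeneracy question to the joint generalised zero-eigenspace of $\{T^{(1)}(z)\}$ via the fusion identity \eqref{eqn:fusion} is correct: on a joint generalised eigenspace the $T^{(1)}$-eigenvalue $\mu(z)$ is a Laurent polynomial (expand $T^{(1)}(z)=\sum_k T_k z^k$ with commuting $T_k$ and simultaneously triangularise), and $\mu(z)\mu(qz)\equiv 0$ forces $\mu\equiv 0$. The upper-semicontinuity step --- packaging $\dim G(q)\ge 2$ as the vanishing of finitely many minors with Laurent-polynomial entries in $q$ --- is standard and valid, and does explain why, for each fixed $N$, the exceptional set is either finite or all of $\mathbb C^{*}$.

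But the argument is incomplete precisely where you say it is. You have not carried out the Ising-limit analysis: after rescaling the weights of figure~\ref{fig:19vertexweights} by $[q][q^2]$ and sending $q\to\infty$, several vertex weights coalesce and many eigenvalues of the limiting transfer matrix can merge; you would need to show that the limit of \eqref{eqn:eigenvalue} stays \emph{isolated} in the limiting spectrum (equivalently, that the associated spectral projector has rank one in the limit), and nothing in your outline does this. The Bethe-completeness alternative is, as you acknowledge, not covered by existing theorems for this paired-inhomogeneity, $\phi=\pi$ configuration. There is also a uniformity issue you do not address: the conjecture, read literally, asks for a \emph{single} finite exceptional set valid for all $N$, whereas your semicontinuity argument --- even if completed --- yields only a finite $S_N$ for each $N$, with no control on $\bigcup_N S_N$. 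In short, the genericity reduction is correctly done, but the decisive step of exhibiting one $q_0$ with $\dim G(q_0)=1$ remains open, consistent with the paper's decision to leave the statement as a conjecture.
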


\section{Properties of the eigenvector}

\label{sec:eigenvector}
In this section, we investigate the properties of the eigenvector of the transfer 
matrices $T^{(j)}(z|w_1,\dots,w_N),\,j=1,2$ with the simple Bethe roots 
\eqref{eqn:betheroots}. It is given by
\begin{equation}
  |\Psi(w_1,\dots,w_N)\rangle = \prod_{j=1}^N \mathcal B(w_j)|{\wedge}\rangle.
  \label{eqn:eigenvector}
\end{equation}
We start with analysing some simple properties of the state in section 
\ref{sec:elementary}. In section \ref{sec:scalar} we use some of them in combination 
with Slavnov's formula in order to prove a multi-parameter sum rule for the 
eigenvector. We apply these results in section \ref{sec:simplecomponents} in order
to determine a simple component of the vector. The homogeneous point $w_1=\cdots=w_N=1$ is studied in section 
\ref{sec:homogeneous}.

\subsection{Elementary properties}
\label{sec:elementary}
Here below, we present a simple graphical interpretation of the Bethe vector 
\eqref{eqn:eigenvector}. Next, we establish an exchange relation which allows to 
permute its inhomogeneity parameters through the action of $R$-matrices. We use it 
in order to show that the Bethe vector contains quite a few redundant factors. Hence 
we introduce a renormalised vector, and show that its solves a set of relations 
which are similar to the quantum Knizhnik-Zamolodchikov system. Eventually we 
determine its degree width as a Laurent polynomial in each variable.

\paragraph{Ground-state components as partition functions of the mixed vertex 
model.} The Bethe eigenvector \eqref{eqn:eigenvector} can be depicted graphically. 
In order to build it, we start the completely polarised state and act with the $
\mathcal B$-operators $N$ times. From the definition of the monodromy matrix we 
infer that these operators can be depicted as
\begin{equation}
\begin{tikzpicture}[baseline=.35]
  \draw (-3,0) node {$\mathcal B(z) = $};
    \draw (-1.5,0) -- (-.5,0);
    \draw (0.5,0) -- (1,0);
    \draw[dotted] (-0.5,0) -- (.5,0);

    \draw[postaction={on each segment={mid arrow}}] (-1.5,0) -- (-2,0);
    \draw[postaction={on each segment={mid arrow}}] (1,0) -- (1.5,0);
    \draw[double] (-1.5,-.5)--(-1.5,.5);
    \draw[double] (-1.,-.5)--(-1.,.5); 
    \draw[double] (1.,-.5)--(1.,.5);   
      
    \draw (0,-.5) node[below] {$\cdots$};
    \draw (-1.5,-.5) node[below] {\small $w_1$};
    \draw (-1.,-.5) node[below] {\small $w_2$};
    \draw (1.,-.5) node[below] {\small $w_N$};
    \draw(-2,0) node[above] {\small $z$};
  \end{tikzpicture}
  \label{eqn:Bgraphical}
\end{equation}
where we indicated the parameters attached to the horizontal and vertical lines. The 
$N$-fold action is obtained from the vertical concatenation of these pictures which 
results in an $N\times N$ square grid whose horizontal lines have outpointing 
arrows, whereas the vertical lines along the bottom row have arrows inwards. The 
projection onto a basis vector fixes the boundary condition along the top row of the 
square, and leads to the following graphical representation for the component $
\Psi_{\sigma_1\sigma_2\cdots\sigma_N}(w_1,\dots,w_N)=\langle \sigma_1,\dots,
\sigma_N|\Psi(w_1,\dots,w_N)\rangle$:
\begin{equation}
  \begin{tikzpicture}[baseline=-.15cm]
  \draw (-5,0) node {$\Psi_{\sigma_1\sigma_2\cdots\sigma_N}(w_1,\dots,w_N)=$};
  \foreach \y in {-1.5,-1,...,1}
  {
    \draw (-1.5,\y) -- (1,\y);
    \draw[postaction={on each segment={mid arrow}}] (-1.5,\y) -- (-2,\y);
    \draw[postaction={on each segment={mid arrow}}] (1,\y) -- (1.5,\y);
    \draw[double] (\y,-1.5) -- (\y,1.5);
    \draw[double,postaction={on each segment={mid arrow}}] (\y,-2) -- (\y,-1.5);  
  }
       \draw (-1.5,1.5) node[above] {$\sigma_1$};
       \draw (-1.,1.5) node[above] {$\sigma_2$};
       \draw (1.,1.5) node[above] {$\sigma_N$};
       \draw (0,1.5) node[above] {$\cdots$};
       \draw (0,-2) node[below] {$\cdots$};
       \draw (-1.5,-2) node[below] {\small $w_1$};
       \draw (-1.,-2) node[below] {\small $w_2$};
       \draw (1.,-2) node[below] {\small $w_N$};
       
       \draw (-2,-1.5) node[left] {\small $w_1$};
       \draw (-2,-1) node[left] {\small $w_2$};
       \draw (-2,1) node[left] {\small $w_N$};
       \draw (-2.2,0) node[left] {$\vdots$};
  \end{tikzpicture}
  \label{eqn:graphicalrepr}
\end{equation}
Any component of the Bethe vector coincides therefore with a partition function of the 
mixed vertex model on such an $N\times N$ square. The Boltzmann weight for a vertex in row $i$ and column $j$ is given by $R^{(1,2)}
(q^{-1}w_i/w_j)$, the weight of a configuration by the product of its vertex weights, and the partition function by the sum over the weights of all configurations compatible with the boundary conditions. Because of the 
commutativity of the $\mathcal B$-operators for different arguments 
\eqref{eqn:commadb}, the vertical arrangement of the parameters $w_1,\dots,w_N$ is completely arbitrary. The graphical representation will be
useful in order to compute a certain simple component of the eigenvector, and study the point $w_1=\dots=w_N=1$.

\paragraph{Exchange relation.} Let us now analyse how to permute the arguments of the Bethe vector
\eqref{eqn:eigenvector}. Notice that even though a generic Bethe state 
\eqref{eqn:bethestate} is symmetric in the Bethe roots we cannot conclude that 
$|\Psi(w_1,\dots,w_N)\rangle$ is symmetric in its arguments, because the $\mathcal B$-operators 
carry an explicit dependence \textit{and} implicit on these parameters. 
Nonetheless, it is possible to derive an exchange relation. To this end, we examine 
the action of the matrix $\check R(z)$ on the vector \eqref{eqn:eigenvector}.
It is useful to study  the commutation relation between the monodromy matrix and the $R$-matrix 
of the nineteen-vertex model. Writing $\mathcal T_a(z) = \mathcal T_a(z|
w_1,\dots,w_N)$ we obtain\begin{equation*}
 \mathcal T_a(z|\dots,w_{j+1},w_{j},\dots)\check R_{j,j+1}(w_j/w_{j+1})=\check 
R_{j,j+1}(w_j/w_{j+1})
\mathcal T_a(z|\dots,w_{j},w_{j+1},\dots),
\end{equation*}
for $j=1,\dots,N-1$.
Indeed, this is easily proved by writing both sides as a product of $R$-matrices, 
and applying the Yang-Baxter equation. As $\check R_{j,j+1}(w_j/w_{j+1})$ does not 
act on the auxiliary space, this relation holds for all matrix elements of 
the monodromy matrix, in particular for $\mathcal B(z)=\mathcal B(z|w_1,\dots,w_N)$. 
Taking into 
account that $\check R_{j,j+1}(z)|\wedge\rangle = [qz][q^2 z]|\wedge\rangle
$ we find thus from \eqref{eqn:eigenvector} the relation
\begin{align}
  \check R_{j,j+1}\left(\frac{w_j}{w_{j+1}}\right)|\Psi(\dots, w_j,w_{j+1},\dots)
\rangle = \left[\frac{qw_j}{w_{j+1}}\right]\left[\frac{q^2w_j}{w_{j+1}}\right]|
\Psi(\dots, w_{j+1},w_{j},\dots)\rangle.
  \label{eqn:exchangeeigenvector}
\end{align}

\paragraph{Renormalised eigenvector.} We show now that 
\eqref{eqn:exchangeeigenvector} implies that the eigenvector is proportional to a 
product of elementary factors which are common to all its components: 
\begin{proposition}
All components of the vector $|\Psi(w_1,\dots,w_N)\rangle$ are proportional to the 
product $\prod_{1\leq j < k \leq N} [qw_j/w_k]$.
\end{proposition}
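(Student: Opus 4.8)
The plan is to argue by induction on the number of sites $N$, using the exchange relation \eqref{eqn:exchangeeigenvector} to ``move'' any chosen pair of arguments next to each other and then extract the common factor $[qw_j/w_k]$ one pair at a time. The key observation is that $\check R_{j,j+1}(z)$ degenerates when its argument specialises: inspecting \eqref{eqn:r12} (or the rank-one structure \eqref{eqn:rankone} of the fused $R$-matrix, together with the corresponding degeneration of $R^{(1,2)}$), the operator $\check R_{j,j+1}(w_j/w_{j+1})$ acquires a zero of order one in the variable $w_j/w_{j+1}$ at the point where $[qw_j/w_{j+1}]$ vanishes. More precisely, I would first establish that $\check R^{(1,2)}(z)$ --- and hence the component-wise action relevant here --- has a simple zero as a function of $z$ at $z = q^{-1}$ coming from the factor $[qz]$ appearing uniformly in the $R^{(1,2)}$ matrix entries; this is the analogue of the degeneration $R^{(1,1)}(z=q^{-1}) = -2[q]P^{-}$ recalled in the excerpt.

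First I would rewrite \eqref{eqn:exchangeeigenvector} as
\begin{equation*}
  |\Psi(\dots, w_{j+1}, w_j, \dots)\rangle
  = \frac{1}{[qw_j/w_{j+1}][q^2 w_j/w_{j+1}]}\,\check R_{j,j+1}\!\left(\frac{w_j}{w_{j+1}}\right)|\Psi(\dots, w_j, w_{j+1}, \dots)\rangle,
\end{equation*}
and then specialise $w_{j+1} \to q w_j$. On the right-hand side the prefactor has a simple pole at this point, while $\check R_{j,j+1}(w_j/w_{j+1}) = \check R_{j,j+1}(q^{-1})$ becomes (up to a nonzero constant) a projector; the product is therefore finite, and the left-hand side is manifestly regular. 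Taking the residue, or equivalently multiplying through by $[qw_j/w_{j+1}]$ before the limit, I obtain that $\check R_{j,j+1}(q^{-1})|\Psi(\dots, w_j, qw_j, \dots)\rangle = 0$. Since $\check R_{j,j+1}(q^{-1})$ is proportional to the rank-one projector onto (the span of) the vector $|s\rangle_{j,j+1}$ of \eqref{eqn:rankone} in sites $j, j+1$, this says $|\Psi(\dots, w_j, qw_j, \dots)\rangle$ lies in the kernel of that projector; combined with the explicit action of $\check R^{(1,2)}$, this forces every component of $|\Psi\rangle$ to vanish when $w_{j+1} = q w_j$, i.e. each component is divisible by $[qw_j/w_{j+1}]$ as a Laurent polynomial in the $w$'s. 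Because the components of $|\Psi(w_1,\dots,w_N)\rangle$ are polynomials (up to overall monomials) in each $w_i^{\pm 1}$ --- which follows from the explicit entries of $R^{(1,2)}$ and will be made precise when the paper discusses the degree width --- divisibility by the irreducible factor $[qw_j/w_{j+1}]$ is legitimate, and since this holds for every adjacent pair and the exchange relation lets us reach every pair $(j,k)$ with $j<k$, we conclude divisibility by $\prod_{1 \le j < k \le N}[qw_j/w_k]$.

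The main obstacle I anticipate is justifying the two polynomiality/divisibility points rigorously: (i) that the specialisation $w_{j+1} = q w_j$ really kills \emph{all} components and not merely the image under one projector --- this requires unpacking $\check R^{(1,2)}(q^{-1})$ on the relevant subspace and checking that its kernel constraint, pulled back through the monodromy matrix, annihilates the full state; and (ii) that the components are genuinely Laurent-polynomial in each $w_i$ with controlled degree, so that ``divisible by $[qw_j/w_k]$'' is a well-defined statement and the factors for distinct pairs multiply to a common divisor rather than merely each appearing separately. Both should follow from the structure already set up in the excerpt (the fusion construction, the explicit $R^{(1,2)}$, and the exchange relation), but care is needed to handle the auxiliary-space trace and the interplay between the explicit and implicit $w$-dependence of the $\mathcal B$-operators. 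I would also verify the base case $N=2$ directly from \eqref{eqn:exchangeeigenvector} and the action of $\mathcal B$ on $|{\wedge}\rangle$, where the statement reduces to $\check R_{1,2}(q^{-1})\mathcal B(w_1)\mathcal B(qw_1)|{\wedge}\rangle = 0$, checkable by hand.
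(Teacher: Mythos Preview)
Your proposal has a genuine gap at the key step. From the exchange relation you correctly derive
\[
  \check R_{j,j+1}(q^{-1})\,|\Psi(\dots, w_j, qw_j, \dots)\rangle = 0,
\]
but $\check R(q^{-1})=[q][q^2]\,|s\rangle\langle s|$ is a rank-one projector on the nine-dimensional space $V_j\otimes V_{j+1}$. The equation therefore only says that $|\Psi(\dots, w_j, qw_j,\dots)\rangle$ lies in the eight-dimensional kernel of this projector --- i.e.\ that its $|s\rangle$-component at sites $j,j+1$ vanishes --- not that the vector itself vanishes. Divisibility of every component by $[qw_j/w_{j+1}]$ requires full vanishing at $w_{j+1}=qw_j$, and the kernel condition is far too weak to give that. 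Your parenthetical hope that ``the explicit action of $\check R^{(1,2)}$'' fills this in is not substantiated and in fact cannot work without additional input: the exchange relation by itself contains no more information at this specialisation, since applying the inversion relation just returns the same identity with the roles of $w_j,w_{j+1}$ swapped. (There is also some confusion between $R^{(1,2)}$ and $R^{(2,2)}$ in your proposal; the exchange relation involves the latter.)

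The paper's proof supplies the missing ingredient from a different source: it evaluates the transfer matrix $T^{(2)}(z)$ at $z=w_j$, where it becomes a product of $\check R$'s and the twisted shift (the scattering operator \eqref{eqn:scatteringop}), and equates this with the known eigenvalue \eqref{eqn:eigenvalue} applied to $|\Psi\rangle$. After repeated use of the exchange relation on the scattering-operator side, one obtains \eqref{eqn:scattering}, which exhibits the factor $\prod_{k=j+1}^N[qw_j/w_k]$ explicitly as a scalar prefactor of a vector equation. No specialisation is needed and no projector degeneracy intervenes. So the eigenvalue --- equivalently, the fact that $|\Psi\rangle$ is a transfer-matrix eigenvector, not merely an object obeying the exchange relation --- is essential to the argument, and your plan omits it.
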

\begin{proof}
We evaluate the action of the transfer matrix $T^{(2)}(w=w_j|w_1,\dots,w_N)$ on 
the vector in two ways. On the one hand, we may use the fact that 
\eqref{eqn:eigenvector} is an eigenvector with eigenvalue \eqref{eqn:eigenvalue}. 
On the other hand, at $w=w_j$ the 
transfer matrix becomes a product of $R$-matrices and the twisted translation 
operator:
\begin{align}
  T^{(2)}(w_j|w_1,\dots, w_N) = [q][q^2]&\check R_{j-1,j}\left({\frac{w_j}{w_{j-1}}}
\right)\cdots \check R_{1,2}\left({\frac{w_j}{w_{1}}}\right)S'
  \label{eqn:scatteringop}
  \\
  &\times \check R_{N-1,N}\left({\frac{w_j}{w_{N}}}\right)\cdots \check R_{j,j
+1}\left({\frac{w_j}{w_{j+1}}}\right) \nonumber.
\end{align}
The result is sometimes called a scattering operator as its effect is to drag the 
inhomogeneity at position $j$ through all others in a cyclic manner.
  
We equate both expressions, use the exchange relation 
\eqref{eqn:exchangeeigenvector}, and find after elimination of some trivial factors 
the following relation for all $j=1,\dots,N-1$:
\begin{align}
    \prod_{k=j+1}^N\left[{qw_j}/{w_k}\right] 
    &\prod_{1\leq k \leq j-1}^\curvearrowleft  \check R_{k,k+1}\left({{w_j}/{w_{k}}}
\right)S'|\Psi(w_1,\dots,w_{j-1},w_{j+1},\dots,w_N,w_j)\rangle
    \nonumber \\
    &= (-1)^{N+1}\prod_{k=1,\, k \neq j}^N\left[{qw_k}/{w_j}\right]\prod_{k=1}
^{j-1}\left[{q^2w_j}/{w_k}\right]|\Psi(w_1,\dots,w_N)\rangle.
    \label{eqn:scattering}
  \end{align}
The arrow $\curvearrowleft$ indicates that the product over the $R$-matrices is 
taken in reverse order. This equation states the equality of two vectors 
which are symmetric Laurent polynomials in each variable. In particular, they have 
the same zeroes, and the same degree for their leading 
and trailing terms. We conclude our eigenvector on the 
right-hand side is proportional to the first factor of the left-hand side
  \begin{equation*}
    \prod_{k=j+1}^N [qw_j/w_k]\quad \text{for all } j=1,\dots, N-1,
  \end{equation*}
  and hence to the product of all these factors. This proves the claim. \qed
\end{proof}

The preceding proposition suggests to divide out the redundant factors. Therefore, 
we introduce the renormalised vector
\begin{equation}
  |\tilde \Psi(w_1,\dots,w_N)\rangle =\left(([q][q^2])^{N/2}\prod_{1\leq j < k \leq 
N} [qw_j/w_k]\right)^{-1}|\Psi(w_1,\dots,w_N)\rangle.
  \label{eqn:defpsitilde}
\end{equation}
The additional division by $([q][q^2])^{N/2}$ is chosen for convenience in order to 
remove common multiplicative constants which come from the spin flips which the $
\mathcal B$-operator induces.

\paragraph{Recurrence.} The vector $|\tilde \Psi(w_1,\dots,w_N)\rangle$ obeys an 
exchange relation, too. Indeed, comparing with \eqref{eqn:exchangeeigenvector} we obtain 
for all $j=1,\dots,N-1$
\begin{subequations}
\begin{align}
  \check R_{j,j+1}\left(\frac{w_j}{w_{j+1}}\right)|\tilde \Psi(\dots, w_j,w_{j+1},
\dots)\rangle = \left[\frac{qw_{j+1}}{w_{j}}\right]\left[\frac{q^2w_j}{w_{j+1}}
\right]|\tilde \Psi(\dots, w_{j+1},w_{j},\dots)\rangle,
  \label{eqn:exchangephi}
\end{align}
whose right-hand side differs slightly from \eqref{eqn:exchangeeigenvector}.
Moreover, the renormalised vector transforms covariantly under cyclic shifts. 
Indeed, using \eqref{eqn:scattering} with $j=1$, we obtain after a slight 
redefinition of the variables the equation
\begin{equation}
  S'|\tilde \Psi(w_1,\dots,w_{N-1},w_N) \rangle = (-1)^{N+1} |\tilde 
\Psi(w_N,w_1,\dots,w_{N-1})\rangle.
  \label{eqn:cyclicphi}
\end{equation}
\label{eqn:qkzsystem}%
\end{subequations}
The equations \eqref{eqn:exchangephi} and \eqref{eqn:cyclicphi} are akin to the so-called quantum Knizhnik-Zamolodchikov system which was used in order to study 
similar problems. They allow to shuffle the inhomogeneity parameters around. 
Moreover, they allow to establish a relation between the vectors for lengths $N$ and 
$N-2$. This can be seen as follows: if we replace $w_j \to q^{-1}w_j$ and $w_{j+1} 
\to w_j$ then the exchange relation becomes
\begin{equation*}
  |\tilde \Psi(\dots,w_j,q^{-1}w_j,\dots)\rangle =([q][q^2])^{-1} \check R_{j,j+1}
(q^{-1})|\tilde \Psi(\dots,q^{-1}w_j,w_j,\dots)\rangle.
\end{equation*}
For $z=q^{-1}$ the $R$-matrix of the nineteen-vertex model is of rank one as was 
shown in \eqref{eqn:rankone}: we have $\check R(q^{-1}) = [q][q^2] |s\rangle\langle 
s|$ where $|s\rangle = |{\Uparrow\Downarrow}\rangle + |{\Downarrow\Uparrow}\rangle - 
|{00}\rangle$. We conclude that if $w_{j+1}=q^{-1}w_j$ then the vector is 
proportional to $|s\rangle$ on sites $j,j+1$. As we are free to permute the 
inhomogeneity parameters with the help of the exchange relation, it is sufficient to 
consider the case $j=1$. Taking advantage of the explicit form of the renormalised 
eigenvector in terms of $\mathcal B$-operators, one finds after a calculation the 
following relation:
\begin{equation*}
  |\tilde \Psi(w_1,q^{-1}w_1,w_3,\dots,w_N)\rangle = (-1)^N [q]\left(\prod_{j=3}^N 
\left[\frac{qw_1}{w_j}\right]\left[\frac{q^2w_j}{w_1}\right]\right) |s\rangle 
\otimes |\tilde \Psi(w_3,\dots,w_N)\rangle.
\end{equation*}
This equation relates the eigenstates at $N$ and $N-2$ through specialisation of a pair of inhomogeneity parameters.

\paragraph{Degree width.} Let us consider a simple example for the components of the 
renormalised vector $|\tilde \Psi\rangle$. Because of covariance under 
cyclic shifts we indicate the expressions only for representatives. For $N=3$ we 
find the expressions
\begin{align*}
  &\tilde \Psi_{\Uparrow 0 \Downarrow}(w_1,w_2,w_3)=\tilde \Psi_{\Downarrow 0 
\Uparrow}
  (w_1,w_2,w_3)=[q][qw_2/w_1][qw_3/w_2],\\
  & \tilde \Psi_{000}(w_1,w_2,w_3)=[q]^2[q^2]+[w_2/w_1][w_3/w_1][w_3/w_2].
\end{align*}
The components are thus centred Laurent polynomials in each variable $w_j$. The 
degree width of a Laurent polynomial is the difference of the degrees of its leading 
and trailing terms: for instance, $\sum_{j=-m}^n a_j w^j$ with $a_{-m},a_n \neq 0$ 
has degree width $m+n$ with respect to $w$. From our example for $N=3$, we see that 
the degree width of the eigenvector in a given variable varies from component to 
component. The degree width of the vector $|\tilde \Psi(w_1,\dots,w_N)\rangle$ in 
$w_j$ is the maximum degree width of its components.
\begin{proposition}
  The degree width of the vector $|\tilde \Psi(w_1,\dots,w_N)\rangle$ in each 
variable is $2(N-1)$. 
\label{prop:degreewidth}
\end{proposition}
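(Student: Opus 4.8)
The plan is to establish upper and lower bounds on the degree width separately. For the upper bound I would track the degrees of the $\mathcal B$-operators directly from the graphical representation \eqref{eqn:graphicalrepr}. Each entry of the monodromy matrix $\mathcal T_a(z|w_1,\dots,w_N)$ is, by construction, a product of $R^{(1,2)}$-matrices, and inspection of \eqref{eqn:r12} shows that each factor $R^{(1,2)}(q^{-1}z/w_j)$ is a centred Laurent polynomial of degree width $2$ in $z$ and degree width $2$ in $w_j$ (the top weight being $[q^2z]=qz-q^{-1}z^{-1}$, linear in $z$, and similarly $[z]$, $[qz]$). Hence in the product $\prod_{j=1}^N\mathcal B(w_j)|\wedge\rangle$ the dependence on a fixed variable $w_j$ enters through two places: once as the spectral parameter of the $j$-th factor $\mathcal B(w_j)$ (degree width $2N$ coming from the $N$ vertices in that row), and once as the inhomogeneity attached to the $j$-th column, appearing in all $N$ factors $\mathcal B(w_1),\dots,\mathcal B(w_N)$ through the vertex in row $k$, column $j$ (degree width at most $2$ per row, so $2N$ in total). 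Naively this gives degree width $4N$; but one must then subtract the common factor $\prod_{1\le j<k\le N}[qw_j/w_k]$ divided out in \eqref{eqn:defpsitilde}, which contains $w_j$ to degree width $2(N-1)$ (it appears in $N-1$ of the factors, linearly each, counting both the leading and trailing contributions after symmetrisation). The remaining technical point is to confirm that the two sources of $w_j$-dependence do not add their full widths independently but combine to give degree width $2(N-1)$ after renormalisation; this is most cleanly argued using the exchange relation \eqref{eqn:exchangephi}, which shows that $|\tilde\Psi\rangle$ is, up to the polynomial prefactor $[qw_{j+1}/w_j][q^2w_j/w_{j+1}]$ of degree width $2$, symmetric under $w_j\leftrightarrow w_{j+1}$ composed with $\check R_{j,j+1}$, and $\check R_{j,j+1}(w_j/w_{j+1})$ has degree width $2$ in $w_j$. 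Combining this with the cyclic covariance \eqref{eqn:cyclicphi}, every variable plays the same role, so it suffices to bound the width in, say, $w_N$; and the recurrence reducing $N$ to $N-2$ when $w_{j+1}=q^{-1}w_j$ lets one set up an induction.

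For the lower bound I would exhibit a single component whose degree width in some variable is exactly $2(N-1)$ and then use cyclic covariance to transport the statement to every variable. A natural candidate is the component $\tilde\Psi_{0\cdots0}(w_1,\dots,w_N)$ (the all-zero component), guided by the $N=3$ example $\tilde\Psi_{000}=[q]^2[q^2]+[w_2/w_1][w_3/w_1][w_3/w_2]$, in which the second term has degree width $2$ in $w_3$—but for general $N$ a better-behaved choice is the ``staircase'' component entering the simple-component analysis, or the component $\tilde\Psi_{\Uparrow\cdots\Uparrow\Downarrow\cdots\Downarrow}$, for which the graphical representation \eqref{eqn:graphicalrepr} forces a unique or nearly-unique lattice configuration whose weight can be computed in closed form as a product, making its degree width transparent. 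Concretely, I would argue that along a frozen boundary the allowed vertex configurations are rigidly determined near the corners, and that the extremal (leading and trailing) powers of $w_N$ are attained without cancellation, so the width is exactly $2N$ before renormalisation and $2(N-1)$ after dividing by $\prod_{1\le j<k\le N}[qw_j/w_k]$.

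The main obstacle I anticipate is the upper bound, specifically ruling out the naive $4N$ estimate: showing that the degree-$2N$ contribution from the spectral parameter and the degree-$2N$ contribution from the inhomogeneity-column genuinely overlap so that the combined width is only $2N$ (before renormalisation). The clean way around this is not to count vertices at all, but to use the exchange relation \eqref{eqn:exchangephi} iteratively: fix the variable $w_N$ and move it to, say, the first slot using $N-1$ applications of \eqref{eqn:exchangephi}; each application multiplies by a factor of $w_N$-degree width $2$ and conjugates by a $\check R$ of $w_N$-degree width $2$, but once $w_N$ sits in the first slot we can apply the cyclic relation \eqref{eqn:cyclicphi} to see that the $w_N$-dependence is then entirely carried by $\mathcal B(w_N)$ acting on a vector independent of $w_N$, which has degree width exactly $2N$. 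Carefully accounting for the accumulated prefactors and the renormalisation then yields $2(N-1)$, and matching against the explicit lower-bound component closes the proof.
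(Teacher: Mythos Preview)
You have missed the one elementary observation that makes the upper bound immediate: in the graphical representation \eqref{eqn:graphicalrepr} the vertex at the intersection of row $j$ and column $j$ carries the $R$-matrix $R^{(1,2)}(q^{-1}w_j/w_j)=R^{(1,2)}(q^{-1})$, which is \emph{constant} in $w_j$. Once this is taken into account, the count is direct: the $w_j$-dependence of $|\Psi(w_1,\dots,w_N)\rangle=\prod_k\mathcal B(w_k)|\wedge\rangle$ sits in the $N-1$ off-diagonal vertices of row $j$ (so $\mathcal B(w_j|w_1,\dots,w_N)$ has degree width $2(N-1)$ in $w_j$) and the $N-1$ off-diagonal vertices of column $j$ (each $\mathcal B(w_k|w_1,\dots,w_N)$ with $k\neq j$ contributes degree width $2$ in $w_j$), giving total degree width $4(N-1)$ for $|\Psi\rangle$. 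The normalising factor in \eqref{eqn:defpsitilde} has degree width exactly $2(N-1)$ in $w_j$, and since degree width is additive under multiplication by a nonzero scalar Laurent polynomial, subtracting gives $2(N-1)$ for $|\tilde\Psi\rangle$. This is precisely the paper's argument. Your ``naive $4N$'' arises because you both double-count the diagonal vertex and fail to notice that it is constant; correcting this gives $4(N-1)$ without further work. The detour through exchange relations and cyclic covariance is then unnecessary for the upper bound, and your sketch in the final paragraph does not actually work as stated: after permuting $w_N$ into the first slot and applying the cyclic shift, $w_N$ still appears as an inhomogeneity in every remaining $\mathcal B$-operator, so the dependence is not ``entirely carried by $\mathcal B(w_N)$ acting on a vector independent of $w_N$''.

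You are right, however, that the argument above strictly speaking yields only an upper bound on the degree width of $|\Psi\rangle$, since cancellations in the leading or trailing coefficients are not ruled out a priori; the paper's proof glosses over this point as well. The sharpness follows a posteriori from the asymptotic reductions in Proposition~\ref{prop:asymptotics} (which show the leading coefficient in $w_j$ is a nonzero vector), or from any of the explicit components computed later. Your instinct to exhibit a component attaining the bound is therefore sound, even if it turns out not to be needed for the paper's purposes.
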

\begin{proof}
  The degree width is obviously and additive quantity under multiplication of 
Laurent polynomials. Consider the operator $\mathcal B(z|w_1,\dots,w_N)$. It is a 
centred Laurent polynomial in all its arguments of degree width $2N$ in $z$, and $2$ 
in $w_k$ for all $k=1,\dots,N$. If we specify however $z=w_j$ for some $j$ then the 
monodromy matrix, and thus $\mathcal B(w_j|w_1,\dots,w_N)$, contains an $R$-matrix 
$R^{(1,2)}_{a,j}(q^{-1})$. Therefore the degree width of $\mathcal B(w_j|
w_1,\dots,w_N)$ is $2(N-1)$ in $w_j$, and $2$ in all other $w_k, \, k\neq j$. The 
construction \eqref{eqn:eigenvector} implies thus that $|\Psi(w_1,\dots,w_N)\rangle$ 
has degree width $4(N-1)$ in each variable because of the additivity property. In 
\eqref{eqn:defpsitilde} we divide out a multi-variable Laurent polynomial which is 
obviously of degree width $2(N-1)$ in each of its variables. Hence, in any $w_j$ the 
degree width of $|\tilde \Psi(w_1,\dots,w_N)\rangle$ is $4(N-1)-2(N-1) = 2(N-1)$. 
\qed
\end{proof}
It is possible to work out relations between components in the limits where $w_j \to 0$ or $w_j \to \infty$, i.e. the highest components of the leading and trailing parts of the vector as a Laurent polynomial in $w_j$. These follow straightforwardly from the graphical representation of the Bethe vector and lead to another recurrence relation. The key observation is to notice that the $R$-matrix $R^{(1,2)}(z)$ is diagonal at leading order when its argument is sent to zero or infinity. This allows to delete the $j$-th row and column from the picture \eqref{eqn:graphicalrepr}. After proper normalisation and some calculation, one finds the following result:
\begin{proposition} In the limit of infinite or zero spectral parameter $w_j$ the components of the renormalised Bethe vector for $N$ sites reduce at leading order to the components for $N-1$ sites according to
\begin{align*}
   \lim_{w_{j}\to\infty}w_j^{-(N-1)}&\tilde\Psi_{\cdots \sigma_{j-1}\sigma_j\sigma_{j+1}\cdots}(w_1,\dots,w_j,\dots,w_N)\\
   &=(-1)^{N-j}\delta_{\sigma_j,0}\left(\prod_{k\neq j}^N w_k^{-1}\right)\tilde\Psi_{\cdots \sigma_{j-1}\sigma_{j+1}\cdots}(w_1,\dots,w_{j-1},w_{j+1},\dots,w_N),
\end{align*}
and 
\begin{align*}
\lim_{w_{j}\to0}w_j^{N-1}&\tilde\Psi_{\cdots \sigma_{j-1}\sigma_j\sigma_{j+1}\cdots}(w_1,\dots,w_j,\dots,w_N)\\& =(-1)^{j-1}\delta_{\sigma_j,0}\left(\prod_{k\neq j}^N w_k\right)\tilde\Psi_{\cdots \sigma_{j-1}\sigma_{j+1}\cdots}(w_1,\dots,w_{j-1},w_{j+1},\dots,w_N),
\end{align*}
where $\delta_{ab} = 1$ for $a=b$, and $0$ otherwise.
  \label{prop:asymptotics}
\end{proposition}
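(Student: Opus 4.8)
The plan is to argue directly from the partition-function picture \eqref{eqn:graphicalrepr}, in which $\Psi_{\sigma_1\cdots\sigma_N}(w_1,\dots,w_N)$ is the sum over admissible configurations of an $N\times N$ grid whose vertex in row $i$ and column $k$ carries the weight $R^{(1,2)}(q^{-1}w_i/w_k)$, with the bottom vertical edges frozen to $\Uparrow$ and the top vertical edges to $\sigma_1,\dots,\sigma_N$. Fix $j$ and let $w_j\to\infty$. Then every vertex on column $j$ other than the corner $(j,j)$ has argument $q^{-1}w_i/w_j\to0$, every vertex on row $j$ other than the corner has argument $q^{-1}w_j/w_k\to\infty$, while the corner keeps the fixed weight $R^{(1,2)}(q^{-1})$. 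The crucial input is the behaviour of the fused $R$-matrix at the two ends of the spectral line, which one reads off directly from \eqref{eqn:r12}:
\begin{align*}
  R^{(1,2)}(z) &= -z^{-1}\,\mathrm{diag}(q^{-2},q^{-1},1,1,q^{-1},q^{-2}) + O(1), && z\to 0,\\
  R^{(1,2)}(z) &= z\,\mathrm{diag}(q^{2},q,1,1,q,q^{2}) + O(z^{-1}), && z\to \infty.
\end{align*}
In either regime the leading term is diagonal in the canonical basis: it preserves the state of every one of the four edges meeting at the vertex and contributes a factor $w_j^{\pm1}$.

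A degree count then controls the calculation. By the proof of proposition \ref{prop:degreewidth}, $|\Psi(w_1,\dots,w_N)\rangle$ is a centred Laurent polynomial of width $4(N-1)$ in $w_j$, hence of top degree $2(N-1)$. The grid has exactly $2(N-1)$ vertices lying on row $j$ or on column $j$ but off the corner; each contributes one power of $w_j$ when it uses a diagonal matrix element of the limiting $R$-matrix and no power of $w_j$ otherwise (its $\sqrt{[q][q^2]}$ entries being $O(1)$ at the relevant end of the line). Hence the coefficient of $w_j^{2(N-1)}$ receives contributions only from configurations in which all of these $2(N-1)$ vertices are diagonal. This freezes the configuration along row $j$ and along column $j$ up to the corner: the vertical state is constant along column $j$ below row $j$ --- hence equal to $\Uparrow$ --- and constant above it; and since each row of the grid \eqref{eqn:graphicalrepr} has left external leg $\downarrow$ and right external leg $\uparrow$, the auxiliary state along row $j$ must be $\downarrow$ to the left of the corner and $\uparrow$ to its right, the flip taking place exactly at the corner. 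Feeding the resulting south-west state $|{\downarrow\Uparrow}\rangle$ into $R^{(1,2)}(q^{-1})$ and using \eqref{eqn:r12} at $z=q^{-1}$, one finds that the only north-east state with auxiliary component $\uparrow$ is $\sqrt{[q][q^2]}\,|{\uparrow 0}\rangle$. Thus the vertical state on column $j$ above the corner is $0$, so $\sigma_j=0$ --- this is the Kronecker delta --- and deleting row $j$ and column $j$ leaves exactly the grid computing $\Psi_{\cdots\sigma_{j-1}\sigma_{j+1}\cdots}(w_1,\dots,w_{j-1},w_{j+1},\dots,w_N)$.

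It remains to collect the scalar prefactors and to pass to the renormalised vector. Besides the overall power $w_j^{2(N-1)}$ and the corner weight $\sqrt{[q][q^2]}$, the frozen vertices on row $j$ and column $j$ contribute factors of the form $q^{\pm s^3(e)}$, one for each vertical edge $e$ of the reduced grid sitting at the height of row $j$ and one for each auxiliary edge of a row below $j$. I expect these to reorganise, via the discrete conservation laws for spin flux through horizontal and vertical cuts of the reduced grid --- the flux through a cut changes from one row (column) to the next by an amount fixed entirely by that row's (column's) boundary legs, which here are prescribed --- into a single scalar having the same value on every admissible configuration of the reduced grid, so that it may be factored out of the sum. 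Multiplying it with the corner weight, the constant $([q][q^2])^{N/2}$ and the product $\prod_{1\leq a<b\leq N}[qw_a/w_b]$ divided out in \eqref{eqn:defpsitilde} --- the latter being centred of top degree $N-1$ in $w_j$ --- produces the renormalised power $w_j^{N-1}$ together with the prefactor $(-1)^{N-j}\prod_{k\neq j}w_k^{-1}$. The limit $w_j\to0$ is entirely parallel: the roles of the $z\to0$ and $z\to\infty$ asymptotics of $R^{(1,2)}$ are interchanged, so now the off-corner row-$j$ vertices carry the $-z^{-1}$ leading term and the off-corner column-$j$ vertices the $+z$ one, the corner analysis is unchanged and again forces $\sigma_j=0$, and the bookkeeping produces the sign $(-1)^{j-1}$ and the factor $\prod_{k\neq j}w_k$.

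The main obstacle, I expect, lies precisely in that last step: proving that the collection of edge-dependent prefactors is genuinely configuration-independent, and then following the signs and powers of $q$ carefully enough to land on exactly $(-1)^{N-j}$ and $(-1)^{j-1}$. Everything preceding it --- the freezing of the configuration, the forced flip at the corner and the appearance of $\delta_{\sigma_j,0}$ --- is an immediate consequence of the diagonal limit of $R^{(1,2)}$ together with proposition \ref{prop:degreewidth}. If one wishes to lighten the sign bookkeeping, one may first use the cyclic covariance \eqref{eqn:cyclicphi} to bring the distinguished variable into the last position, which removes the rows above row $j$ from the discussion; the residual prefactor computation has the same character.
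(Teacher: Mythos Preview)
Your approach is exactly the one the paper indicates: the paper does not give a formal proof of this proposition but only the sentence preceding it, namely that the result ``follows straightforwardly from the graphical representation of the Bethe vector \dots\ the key observation is to notice that the $R$-matrix $R^{(1,2)}(z)$ is diagonal at leading order when its argument is sent to zero or infinity. This allows to delete the $j$-th row and column from the picture.'' Your write-up is in fact more detailed than what the paper supplies, and the mechanism you describe (degree count forcing diagonal vertices on the cross, corner analysis giving $\delta_{\sigma_j,0}$, deletion of row and column) is precisely the intended one.

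Concerning the point you flag as the main obstacle --- the configuration-independence of the $q^{\pm\alpha}$ prefactors coming from the frozen edges --- your instinct is correct and the conservation argument goes through. One clean way to see it: apply the local rule $\sigma^3_{\rm left}+s^3_{\rm below}=\sigma^3_{\rm right}+s^3_{\rm above}$ summed over the $(j{-}1)\times(j{-}1)$ lower-left block of the reduced grid; its bottom and left boundaries are fixed ($\Uparrow$ and $\downarrow$ respectively), so the combination $\sum_{k<j}s^3(\tau_k)+\sum_{i<j}\sigma^3(\mu_i)$ of the interior-edge exponents you need equals $(j-1)/2$ for every admissible configuration. Together with the global magnetisation constraint $\sum_{k\neq j}s^3(\tau_k)=N-j$, this collapses the total $q$-exponent to $N{-}2j{+}1$, which then cancels against the asymptotics of $\prod_{a<b}[qw_a/w_b]$ and yields the stated sign $(-1)^{N-j}$ and prefactor $\prod_{k\neq j}w_k^{-1}$. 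Your alternative suggestion --- use the commutativity of the $\mathcal B$'s to place the $w_j$-row on top and the cyclic covariance \eqref{eqn:cyclicphi} to push the $w_j$-column to the right --- is even cleaner, since then the frozen-edge states are all external boundary data and no conservation argument is needed.
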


\subsection{Scalar products and the square norm}
\label{sec:scalar}
The completely explicit nature of the Bethe roots \eqref{eqn:betheroots} implies 
that all components of the eigenvectors $|\Psi(w_1,\dots,w_N)\rangle$ and 
$|\tilde \Psi(w_1,\dots,w_N)\rangle$ can in principle be computed in finite size. It is desirable 
to develop a systematic practical scheme to do this for arbitrary finite $N$. We 
leave this to a future investigation. In this article we evaluate only a simple component
which turns out to be intimately related to the square norm of the vector. Hence, we discuss first the square norm in this section. 

A natural generalisation of the square norm of the eigenstates for the spin-one XXZ 
chain is the infinite-cylinder partition function (see \cite{zinnjustin:13} for an 
explanation of this interpretation) :
\begin{align*}
  Z(w_1,\dots,w_N) &= \langle \tilde \Psi(w_1^{-1},\dots,w_N^{-1})|\tilde 
\Psi(w_1,\dots,w_N)\rangle\\
&= \sum_{\sigma \in \{\Uparrow,0,\Downarrow\}^N} \tilde \Psi_{\sigma_1\cdots
\sigma_N}(w_1^{-1}, \dots, w_N^{-1})\tilde \Psi_{\sigma_1\cdots
\sigma_N}(w_1, \dots, w_N).
\end{align*}
As in the homogeneous case, we use the \textit{real} scalar product on $V$. In order to evaluate $Z(w_1,\dots,w_N)$ explicitly, we rewrite it in terms of the 
elements $\mathcal B,\,\mathcal C$ of the monodromy matrix, and use the theory of 
scalar products of the algebraic Bethe ansatz. To this end, we use the fact that 
under transposition the operator $\mathcal B$ can be transformed to $\mathcal C$ 
according to
\begin{equation*}
  \mathcal B(z^{-1}|w_1^{-1},\dots,w_N^{-1})^t = (-1)^{N-1}\mathcal C(z|
w_1,\dots,w_N).
\end{equation*}
This is a direct consequence of the so-called crossing symmetry of 
$R^{(1,2)}(z)$: the transpose with respect to the second space it acts on can be 
written as
\begin{equation*}
  R^{(1,2)}(z)^{t_2}=(\sigma^2\otimes 1)R^{(1,2)}(q^{-2}z^{-1})^{t_2}
(\sigma^2\otimes 1),\quad \sigma^2 =\left(
  \begin{array}{cc}
    0 & - \i\\
    \i & 0
  \end{array}
  \right).
  \end{equation*}
Combining this with the definition of the renormalised Bethe vector 
\eqref{eqn:defpsitilde} we obtain
\begin{equation}
  Z(w_1,\dots,w_N) =\frac{\langle \wedge|\prod_{j=1}^N \mathcal C(w_j|
w_1,\dots,w_N) \prod_{j=1}^N \mathcal B(w_j|w_1,\dots,w_N)|\wedge\rangle}
{(-1)^N[q^2]^N \prod_{j,k=1}^N [q^{-1}w_j/w_k]}.
  \label{eqn:zbc}
\end{equation}
Scalar products of this type can be computed with the help of determinant formulae thanks to
Slavnov's formula \cite{slavnov:89}. The original proof was given for periodic
boundary conditions but the generalisation to twisted boundary conditions is immediate :
\begin{theorem}[Slavnov's formula] Let $n>0$ and consider a solution $z_1,\dots, z_n
$ to the Bethe ansatz equations \eqref{eqn:bae} for $N$ sites and a set of arbitrary 
numbers $\zeta_1,\dots,\zeta_n$. Then the scalar product
\begin{equation*}
  S_n= \langle \wedge|\prod_{j=1}^n \mathcal C(z_j) \prod_{j=1}^n \mathcal 
B(\zeta_j)|\wedge\rangle
\end{equation*}
is given by
\begin{equation*}
  S_n = \prod_{j=1}^n d(z_j) d(\zeta_j)\prod_{1\leq k<j\leq n}g(z_j,z_k)g(\zeta_k,
\zeta_j)\prod_{j,k=1}^n \frac{f(z_j, \zeta_k)}{g(z_j, \zeta_k)}\,\det M,
\end{equation*}
where $M = (M_{jk})$ is an $n \times n$ matrix with entries
\begin{equation*}
  M_{jk} = e^{-\i \phi}\frac{g(z_j,\zeta_k)^2}{f(z_j,\zeta_k)}-
\frac{g(\zeta_k,z_j)^2}{f(\zeta_k,z_j)}\frac{a(\zeta_k)}{d(\zeta_k)}\prod_{m=1}^n 
\frac{f(\zeta_k,z_m)}{f(z_m,\zeta_k)}, \quad j,k=1,\dots,n.
\end{equation*}
\end{theorem}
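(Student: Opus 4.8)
The statement is Slavnov's theorem with the twist angle carried along, so the plan is to rerun the standard argument \cite{slavnov:89,korepin:93} and to verify that the only change it produces is the explicit factor $e^{-\i\phi}$ in the first summand of $M_{jk}$. The structural fact that pins down where this factor must sit is a pole cancellation. Since the operators $\mathcal B(\zeta_k)$ are Laurent polynomials in their arguments, $S_n$ is a Laurent polynomial in each $\zeta_k$ with no poles, whereas the right-hand side, term by term, has apparent poles at the coincidences $\zeta_k=z_j$ arising from $g(z_j,\zeta_k)^2/f(z_j,\zeta_k)$ and $g(\zeta_k,z_j)^2/f(\zeta_k,z_j)$ inside $M_{jk}$. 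On shell these must cancel, and they do: as $\zeta_k\to z_j$ both terms of $M_{jk}$ develop the same simple pole $e^{-\i\phi}[q]/[\zeta_k/z_j]$ — the first because of the explicit prefactor $e^{-\i\phi}$, the second after the Bethe equation \eqref{eqn:bae} for $z_j$ is used to rewrite $a(z_j)/d(z_j)$ in the limiting behaviour $([q]/[\zeta_k/z_j])\,(a(z_j)/d(z_j))\prod_{m\neq j}f(z_j,z_m)/f(z_m,z_j)$ — so they cancel in $M_{jk}$. Hence the twist enters $M_{jk}$ for no other reason than to absorb the $e^{-\i\phi}$ that the Bethe equations inject; for $\phi=0$ both prefactors are $1$ and one recovers the original formula.

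I would then carry out the proof by induction on $n$. The base case $n=1$ is immediate: commute $\mathcal C(z_1)$ past $\mathcal B(\zeta_1)$ using the appropriate relation from \eqref{eqn:commadb}, apply $\mathcal C(z)|{\wedge}\rangle=0$, $\mathcal A(z)|{\wedge}\rangle=a(z)|{\wedge}\rangle$ and $\mathcal D(z)|{\wedge}\rangle=d(z)|{\wedge}\rangle$, and eliminate $a(z_1)/d(z_1)$ with the single Bethe equation (the product over $m\neq1$ being empty); the result is the $1\times1$ determinant. For the inductive step one commutes, say, $\mathcal C(z_1)$ rightward through $\prod_{k}\mathcal B(\zeta_k)$ by iterated use of \eqref{eqn:commadb}, drops the contribution in which it reaches $|{\wedge}\rangle$, and is left with a sum of $(n-1)$-rapidity scalar products multiplied by explicit rational coefficients; the Bethe equations for $z_2,\dots,z_n$ (each contributing one factor $e^{-\i\phi}$) are then used to reorganise this sum, and the induction hypothesis turns it into the Laplace expansion of $\det M$ along the row labelled by $z_1$, the $(1,k)$ cofactor being exactly the $(n-1)\times(n-1)$ minor on $z_2,\dots,z_n$, up to the prefactors $\prod_j d(z_j)d(\zeta_j)$, $\prod_{k<j}g(z_j,z_k)g(\zeta_k,\zeta_j)$ and $\prod_{j,k}f(z_j,\zeta_k)/g(z_j,\zeta_k)$. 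Symmetry of the right-hand side in the $\zeta_k$ is manifest (exchanging two $\zeta$'s swaps two columns of $M$ together with a pair of $g$-factors), and symmetry in the $z_j$ makes the choice of which $\mathcal C$ to extract immaterial.

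The genuinely non-trivial ingredient is the combinatorial reorganisation in the inductive step: the off-shell scalar product comes out as a large sum over distributions of the parameters among the two sets, and it is precisely the Bethe equations \eqref{eqn:bae} that collapse it into a single determinant. This is the heart of Slavnov's theorem; one may either reproduce that classical computation or regard the periodic statement as known input. The twisted generalisation then costs only the verification that each use of a Bethe equation in the reorganisation contributes the same factor $e^{-\i\phi}$ and that all such contributions collect into the first summand of $M_{jk}$ — which is exactly the residue computation of the first paragraph. No other $\phi$-dependence can arise, since $\mathcal B$, $\mathcal C$ and the pseudovacuum eigenvalues $a(z)$, $d(z)$ are all independent of $\phi$; this is why the extension to diagonal twisted boundary conditions is immediate.
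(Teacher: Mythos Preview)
The paper does not supply its own proof of this theorem: it states Slavnov's formula as a known result, cites the original references \cite{slavnov:89,korepin:93}, and simply remarks that ``the original proof was given for periodic boundary conditions but the generalisation to twisted boundary conditions is immediate.'' Your proposal is consistent with this and in fact spells out \emph{why} the extension is immediate --- the operators $\mathcal B,\mathcal C$ and the pseudovacuum eigenvalues $a,d$ are twist-independent, so the only place $\phi$ can enter is through the Bethe equations, which deposits a single factor $e^{-\i\phi}$ in the first summand of $M_{jk}$; your residue check at $\zeta_k\to z_j$ is a clean way to confirm that this is exactly the right placement. The inductive outline you give is the standard Slavnov argument and is correct as a sketch, though the combinatorial reorganisation you flag as ``genuinely non-trivial'' is indeed the real work and would need to be written out (or cited) for a complete proof; since the paper itself defers to the literature at precisely this point, your level of detail already exceeds what the paper provides.
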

For our problem the functions $a(z),d(z),g(z,w),f(z,w)$ were defined in section 
\ref{sec:aba}. The result is in fact very strong as it does not depend on their 
precise form but just a few analyticity properties.

The expression simplifies considerably if we specify for $\phi=\pi,\,n=N$ the Bethe 
roots to \eqref{eqn:betheroots}, and make use of the definition of $a(z), d(z)$. In 
this case, we obtain after some algebra the expression
\begin{equation}
  S_N = (-1)^N \left(\prod_{j=1}^N d(w_j)\right) Z_{\text{\tiny IK}}(\zeta_1,\dots,
\zeta_N;w_1,\dots,w_N).
  \label{eqn:scalarprod}
\end{equation}
Here $Z_{\text{\tiny IK}}(\zeta_1,\dots,\zeta_N;z_1,\dots,z_N)$ is given by the
so-called Izergin-Korepin determinant formula \cite{izergin:92}:
\begin{equation}
  Z_{\text{\tiny IK}}(\zeta_1,\dots,\zeta_N;w_1,\dots,w_N) = \frac{\prod_{j,k=1}^N 
\mathfrak a(\zeta_j/w_k)\mathfrak b(\zeta_j/w_k)}{\prod_{j<k}[\zeta_j/\zeta_k][w_k/
w_j]}\det_{j,k=1,\dots,N}\left(\frac{\mathfrak c(\zeta_j/w_k)}{\mathfrak a(\zeta_j/
w_k)\mathfrak b(\zeta_j/w_k)}\right).
  \label{eqn:IKdet}
\end{equation}
The functions $\mathfrak a(z), \mathfrak b(z),\mathfrak c(z)$ are 
are the statistical weights for the first, second, and third group of vertices of a 
six-vertex model, shown in figure \ref{fig:6v}. 
\begin{figure}[h]
  \centering
  \begin{tikzpicture}
    \drawvertex{1}{0}{0}
    \drawvertex{2}{1.25}{0}
    \drawvertex{3}{3.25}{0}
    \drawvertex{4}{4.5}{0}
    \drawvertex{5}{6.5}{0}
    \drawvertex{6}{7.75}{0}
    
    \draw (1.125,-1) node {$\mathfrak a(z) = [qz]$};
    \draw (4.425,-1) node {$\mathfrak b(z) = [q /z]$};
    \draw (7.425,-1) node {$\quad \mathfrak c(z) = [q^2]
$};
    \end{tikzpicture}
 \caption{Vertex configurations of the six-vertex model. A vertex of the first, 
second or third group has weight $\mathfrak a(z),\,\mathfrak b(z)$ or $\mathfrak 
c(z)$ respectively.}
 \label{fig:6v}
\end{figure}
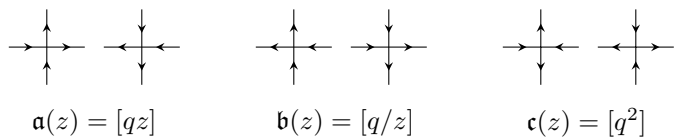
The Izergin-Korepin determinant \eqref{eqn:IKdet} is the partition function of the 
inhomogeneous six-vertex model on an $N\times N$ square lattice with so-called 
domain wall boundary conditions as illustrated on figure \ref{fig:dwbc}. At the top 
and bottom row of the square all arrows are outgoing, whereas they are ingoing at its left- and 
rightmost column. The weight of a vertex in row $j$ and column $k$ 
is chosen with spectral parameter $z= \zeta_j/w_k$ for $j,k=1,\dots,N$. The weight 
of a configuration is the product of the weights for each of its vertices.
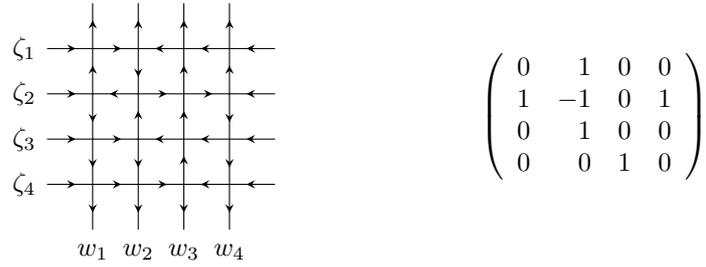
\begin{figure}
  \centering
  % Write a code which draws vertex configurations automatically.
  \begin{tikzpicture}[scale=1.2]

  \foreach \x in {0,.5,1,...,1.5}
  {
    \draw[postaction={on each segment={mid arrow}}] (\x,0) -- (\x,.5);
    \draw[postaction={on each segment={mid arrow}}] (-.5,-\x) -- (0,-\x);

    \draw[postaction={on each segment={mid arrow}}] (\x,-1.5) -- (\x,-2);
    \draw[postaction={on each segment={mid arrow}}] (2,-\x) -- (1.5,-\x);  
  }
  
  \draw[postaction={on each segment={mid arrow}}] (0,0) -- (.5,0) -- (.5,-.5) -- 
(1,-.5)--(1.5,-.5) -- (1.5,-1) -- (1.5,-1.5);
  \draw[postaction={on each segment={mid arrow}}] (1.5,-.5)--(1.5,0) -- (1,0) -- (.
5,0);
  \draw[postaction={on each segment={mid arrow}}] (1.5,-1.5)--(1,-1.5) -- (1,-1) -- 
(1,-.5) -- (1,0);
  \draw[postaction={on each segment={mid arrow}}] (1.5,-1)--(1,-1) -- (.5,-1) -- (.
5,-.5)--(0,-.5)--(0,0);
  \draw[postaction={on each segment={mid arrow}}] (0,-.5)--(0,-1) -- (.5,-1) -- (.
5,-1.5)--(1,-1.5);
  \draw[postaction={on each segment={mid arrow}}] (0,-1)--(0,-1.5) -- (.5,-1.5);
  
  \draw (0,-2.25) node {$w_1$}; 
  \draw (.5,-2.25) node {$w_2$}; 
  \draw (1,-2.25) node {$w_3$}; 
  \draw (1.5,-2.25) node {$w_4$};  
  
  \draw (-.75,0) node {$\zeta_1$}; 
  \draw (-.75,-.5) node {$\zeta_2$}; 
  \draw (-.75,-1) node {$\zeta_3$}; 
  \draw (-.75,-1.5) node {$\zeta_4$};  
   \draw (5.5,-.75) node 
    {
      $\left(
         \begin{array}{rrrr}
           0 & 1 & 0 & 0\\
           1 & -1 & 0 & 1\\
           0 & 1 & 0 & 0\\
           0 & 0 & 1 & 0
         \end{array}
       \right)
      $
    };

  \end{tikzpicture}
  \caption{\textit{Left:} $N\times N$ square lattice with domain-wall boundary 
conditions for $N=4$. The weight of a vertex in row $j$ and column $k$ is a function 
of the spectral parameters $\zeta_j$ and $w_k$. \textit{Right:} The corresponding 
alternating sign matrix.}
  \label{fig:dwbc}
\end{figure}

If we set $\zeta_j=w_j$ for all $j=1,\dots,N$, and combine the expression of the 
partition function \eqref{eqn:zbc} with our findings, then we obtain the following 
closed form:
\begin{proposition}
The renormalised Bethe vector $|\tilde \Psi(w_1,\dots,w_N)\rangle$ satisfies the 
following sum rule:
\begin{equation*} 
  Z(w_1,\dots,w_N) =[q^2]^{-N}Z_{\text{\rm \tiny IK}}(w_1,\dots,w_N;w_1,\dots,w_N).
\end{equation*}
\label{prop:partfunc}
\end{proposition}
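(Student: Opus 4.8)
The plan is to combine the scalar-product identity \eqref{eqn:zbc} with Slavnov's formula, evaluated at the explicit Bethe roots \eqref{eqn:betheroots}, and then take the specialisation $\zeta_j=w_j$. Concretely, I start from \eqref{eqn:zbc}, which expresses $Z(w_1,\dots,w_N)$ as the scalar product $\langle\wedge|\prod_j\mathcal C(w_j)\prod_j\mathcal B(w_j)|\wedge\rangle$ divided by the factor $(-1)^N[q^2]^N\prod_{j,k=1}^N[q^{-1}w_j/w_k]$. The numerator is exactly $S_N$ with $z_j=w_j$ (a valid solution of the Bethe equations \eqref{eqn:bae} for $\phi=\pi$, $n=N$, by \eqref{eqn:betheroots}) and $\zeta_j=w_j$. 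So I invoke \eqref{eqn:scalarprod}, which already gives $S_N=(-1)^N\left(\prod_{j=1}^N d(w_j)\right)Z_{\text{\tiny IK}}(w_1,\dots,w_N;w_1,\dots,w_N)$ for the choice $\zeta_j=w_j$.

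The remaining work is purely to simplify the prefactor. Substituting $S_N$ into \eqref{eqn:zbc} gives
\begin{equation*}
  Z(w_1,\dots,w_N)=\frac{(-1)^N\left(\prod_{j=1}^N d(w_j)\right)Z_{\text{\tiny IK}}(w_1,\dots,w_N;w_1,\dots,w_N)}{(-1)^N[q^2]^N\prod_{j,k=1}^N[q^{-1}w_j/w_k]}.
\end{equation*}
The $(-1)^N$ cancels, so I must show $\prod_{j=1}^N d(w_j)=\prod_{j,k=1}^N[q^{-1}w_j/w_k]$. This is immediate from the definition $d(z)=\prod_{k=1}^N[q^{-1}z/w_k]$ given in section \ref{sec:aba}: setting $z=w_j$ and taking the product over $j$ reproduces exactly the double product $\prod_{j,k=1}^N[q^{-1}w_j/w_k]$. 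Hence the entire prefactor collapses to $[q^2]^{-N}$, yielding $Z(w_1,\dots,w_N)=[q^2]^{-N}Z_{\text{\tiny IK}}(w_1,\dots,w_N;w_1,\dots,w_N)$, which is the claim.

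There is essentially no hard obstacle here: the proposition is a short corollary of the machinery already assembled. The only genuinely non-trivial input, Slavnov's formula specialised to our Bethe roots \eqref{eqn:scalarprod}, and the transposition/crossing identity behind \eqref{eqn:zbc}, have both been established before the statement. The one point that deserves a line of care is checking that $z_j=w_j$ does solve \eqref{eqn:bae} at $\phi=\pi$ — but this is precisely the content of the computation around \eqref{eqn:betheroots}, where both sides of \eqref{eqn:bae} were shown to vanish (the left side has a zero from the $k=j$ factor $[q z_k/w_k]=[q]\neq 0$; in fact one observes the Bethe equation holds because the products telescope with $e^{-\i\phi}=-1$). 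So the proof is just: apply \eqref{eqn:scalarprod} in \eqref{eqn:zbc}, cancel $(-1)^N$, and recognise $\prod_j d(w_j)$ as the denominator's double product. I would present it in exactly that order, keeping it to a few lines.
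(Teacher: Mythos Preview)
Your proposal is correct and follows exactly the paper's approach: combine \eqref{eqn:zbc} with the specialisation of \eqref{eqn:scalarprod} at $\zeta_j=w_j$ and simplify using $\prod_j d(w_j)=\prod_{j,k}[q^{-1}w_j/w_k]$. The paper states this combination in one sentence without writing out the cancellation, so your version is simply a more detailed rendering of the same argument; the parenthetical about verifying \eqref{eqn:betheroots} is slightly garbled but inessential, since that fact is established earlier.
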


Apart from leading to an explicit formula for the square norm of $|\tilde \Psi
\rangle$, the identification \eqref{eqn:scalarprod} has another interesting 
consequence. Notice that if we choose $\zeta_1,\dots,\zeta_N$ to be a solution of 
the Bethe equations with $\phi=\pi, n=N$ which leads to another eigenvalue, then the 
scalar product has to be zero. This implies that all such solutions need to solve 
$Z_{\text{\tiny IK}}(\zeta_1,\dots,\zeta_N;w_1,\dots,w_N)=0$ (in addition to the Bethe
equations): they lie in the 
algebraic variety defined by the zeroes of the Izergin-Korepin determinant in the 
first set of variables. It might be interesting to study the nature of these points 
on the variety.

\subsection{Simple components}
\label{sec:simplecomponents}
In this section we use the closed expressions for scalar products involving the vector $|\tilde \Psi(w_1,\dots,w_N)\rangle$ in order to determine the components
\begin{align*}
  \tilde \Psi_{ \underset{n}{\underbrace{\Uparrow \cdots \Uparrow}}\underset{n}{\underbrace{\Downarrow\cdots \Downarrow}}}(w_1,\dots,w_{2n}) \quad \text{and} \quad \tilde \Psi_{\underset{n}{\underbrace{\Uparrow \cdots \Uparrow}}0\underset{n}{\underbrace{\Downarrow\cdots \Downarrow}}}(w_1,\dots,w_{2n+1})
\end{align*}
for even length $N=2n$ and odd length $N=2n+1$ respectively. They are the most simple components in the sense that they can be evaluated as a product of monomials and the Izergin-Korepin determinant.

\paragraph{Even lengths.} We start with even $N=2n$. We claim that the component is given by the following matrix element for a system of length $n=N/2$:
\begin{align}
  \tilde \Psi_{\Uparrow\dots \Uparrow\Downarrow\cdots \Downarrow}(w_1,\dots, w_{2n})
    = \frac{\prod_{j=1}^{2n}\prod_{k=1}^n[q^{-1}w_j/w_k]}{([q][q^2])^n\prod_{j<k}^{2n}[q w_j/w_k]}\,\langle \vee | \prod_{j=1}^{2n} \mathcal B(w_j|w_{n+1},\dots,w_{2n})|\wedge\rangle.
    \label{eqn:componentfromB}
\end{align}
Here $|\vee\rangle = |{\Downarrow\cdots \Downarrow}\rangle$ is the spin-reversed Bethe reference state $|\wedge\rangle = |{\Uparrow\cdots \Uparrow}\rangle$ for $n$ sites. This surprising reduction of the system size is most easily seen from the graphical representation for the corresponding Bethe vector component $\Psi_{\Uparrow\dots \Uparrow\Downarrow\cdots \Downarrow}(w_1,\dots, w_{2n})$. Let us use the recipe of section \ref{sec:elementary}, and write it down graphically as the left-hand side of the following equation:
\begin{equation}
\begin{tikzpicture}[baseline=1.15cm]
\begin{scope}
\draw[rounded corners=3pt, color=lightgray, fill=lightgray] (-.35,-.4) 
rectangle(1.3,2.9);
  \foreach \x in {0,1,1.5,2.5}
{
  \draw[double,postaction ={on each segment={mid arrow}}] (\x,-.5) -- (\x,0);
\draw[double] (\x,0) -- (\x,2.5);
}
\draw (0,-.5) node[below] {\small $w_1$};
\draw (1,-.5) node[below] {\small $w_n$};
\draw (1.6,-.5) node[below] {\small $w_{n+1}$};
\draw (2.5,-.5) node[below] {\small $w_{2n}$};
\foreach \y in {0,1,1.5,2.5}
{
  \draw (0,\y) -- (2.5,\y);
  \draw[postaction ={on each segment={mid arrow}}]  (0,\y) -- (-.5,\y);
\draw[postaction ={on each segment={mid arrow}}]  (2.5,\y) -- (3,\y);
}
\foreach \x in {-.25,2.75}
{
  \foreach \y in {.625,2.125}
    \draw (\x,\y) node{$\vdots$};
}
\foreach \y in {-.25,2.75}
{
  \foreach \x in {.5,2.}
    \draw (\x,\y) node{$\dots$};
}
  \draw (-.5,0) node [left] {$w_1$};
  \draw (-.5,1) node [left] {$w_n$};
  \draw (-.5,1.5) node [left] {$w_{n+1}$};
  \draw (-.5,2.5) node [left] {$w_{2n}$};
\foreach \x in {0,1}
{
  \draw [double, postaction ={on each segment={mid arrow}}] (\x,2.5) -- (\x,3);
\draw [xshift=1.5cm, double, postaction ={on each segment={mid arrow}}] (\x,3) -- (\x,2.5);
}
\end{scope}
\begin{scope}[xshift=6.75cm]
\draw (-1.75,1.25) node {$=\prod_{j=1}^{2n}\prod_{k=1}^{n}\left[\frac{w_j}{q w_k}\right] \times$};
\foreach \x in {1.5,2.5}
{
  \draw[double,postaction ={on each segment={mid arrow}}] (\x,-.5) -- (\x,0);
\draw[double] (\x,0) -- (\x,2.5);
}
\draw (1.5,-.5) node[below] {\small $w_{n+1}$};
\draw (2.5,-.5) node[below] {\small $w_{2n}$};
\foreach \y in {0,1,1.5,2.5}
{
  \draw (1.5,\y) -- (2.5,\y);
  \draw[postaction ={on each segment={mid arrow}}]  (1.5,\y) -- (1,\y);
\draw[postaction ={on each segment={mid arrow}}]  (2.5,\y) -- (3,\y);
}
\foreach \x in {1.25,2.75}
{
  \foreach \y in {.625,2.125}
    \draw (\x,\y) node{$\vdots$};
}
\foreach \y in {-.25,2.75}
{
  \foreach \x in {2.}
    \draw (\x,\y) node{$\dots$};
}
  \draw (1,0) node [left] {$w_1$};
  \draw (1,1) node [left] {$w_n$};
  \draw (1,1.5) node [left] {$w_{n+1}$};
  \draw (1,2.5) node [left] {$w_{2n}$};
\foreach \x in {0,1}
{
\draw [xshift=1.5cm, double, postaction ={on each segment={mid arrow}}] (\x,3) -- (\x,2.5);
}
\end{scope}
\end{tikzpicture}
\label{eqn:sizereduction}
\end{equation}
In fact, the shaded region can be replaced by a simple product of Boltzmann weights. To see this, we inspect the vertex located at the upper-left corner of the picture: it has two outgoing arrows. The only possibility for it to have a non-vanishing Boltzmann weight is that its lower and right edge are ingoing. This gives the following configuration:
\begin{equation*}
  \begin{tikzpicture}
    \begin{scope}
    \clip (0,-.2) rectangle (1.8,1.5);
\draw[rounded corners=3pt, color=lightgray, fill=lightgray] (.1,1.4) rectangle (2.5,-1); 
    \draw[postaction={on each segment={mid arrow}}] (1,1) -- (.5,1) -- (0,1);
    \draw[postaction={on each segment={mid arrow}}] (.5,.5) -- (0,.5);
    \draw (1,1) -- (1.8,1);
    \draw (.5,.5) -- (1.8,.5);
    \draw[double] (.5,-.2) -- (.5,.5);
    \draw[double] (1,-.2) -- (1,1);
    \draw[double,postaction={on each segment={mid arrow}}] (.5,.5)--(.5,1) -- (.5,1.5);
    \draw[double] (1,0) -- (1,1);
    \draw[double,postaction={on each segment={mid arrow}}] (1,1) -- (1,1.5);
    \end{scope}
    
    \draw (0,1) node [left] {$w_{2n}$};
    \draw (0,.5) node [left] {$w_{2n-1}$};
    \draw (0.5,1.5) node [above] {$w_1$};
    \draw (1,1.5) node [above] {$w_2$};
    \draw (1.5,1.25) node {$\cdots$};
    \draw (0.25,0.25) node {$\vdots$};
  \end{tikzpicture}
\end{equation*}
We see that the vertices in the immediate neighbourhood of the corner have now two outgoing arrows, and hence their remaining edges need to be ingoing: this situation propagates, and allows to peel off row by row (or column by column) from the shaded region and replace the vertices by their weights $[q^{-1}w_j/w_k],\,j=1,\dots,2n,\, k=1,\dots,n$. This leads to the factor of the right-hand side in \eqref{eqn:sizereduction}. Using then the graphical representation of $\mathcal B(w|w_{n+1},\dots, w_{2n})$ as shown in \eqref{eqn:Bgraphical} leads, after correct normalisation, to the expression \eqref{eqn:componentfromB}.

The next step consists of converting \eqref{eqn:componentfromB} into a scalar product which can be computed with the help of Slavnov's formula. To this end, we need a short digression on the spin-reversal operator $\mathfrak R$. For length $N$, it acts like the matrix
\begin{equation*}
  \mathfrak R =
  \left(
    \begin{smallmatrix}
      0 & 0 & 1\\
      0 & 1 & 0\\
      1 & 0 & 0
    \end{smallmatrix}
  \right)^{\otimes N}.
\end{equation*}
It is not very difficult to show that the entries of the monodromy matrix can be related by spin reversal according to
\begin{equation*}
  \mathfrak R \mathcal A(z|w_1,\dots,w_N) = \mathcal D(z|w_1,\dots,w_N) \mathfrak R,\quad
  \mathfrak R \mathcal B(z|w_1,\dots,w_N) = \mathcal C(z|w_1,\dots,w_N) \mathfrak R
\end{equation*}
Hence the transfer matrices $T^{(j)}(w|w_1,\dots,w_N),\, j=1,2,$ with twist angle $\phi = \pi$ satisfy the (anti)com{\-}mutation relations
\begin{equation*}
  \mathfrak R T^{(j)}(z|w_1,\dots,w_N) = (-1)^j T^{(j)}(z|w_1,\dots,w_N) \mathfrak R.
\end{equation*}
Interestingly, the anticommutation relation between $T^{(1)}(z)$ and $\mathfrak R$ implies that for any eigenstate $|\Psi\rangle$ with $T^{(1)}(z)|\Psi\rangle = \theta^{(1)}(z)|\Psi\rangle$ the state $\mathfrak R|\Psi\rangle$ is also an eigenstate of $T^{(1)}(z)$ with eigenvalue $-\theta^{(1)}(z)$. Hence, all their simultaneous eigenvectors must be annihilated by $T^{(1)}(z)$. Unfortunately, this does not imply that a given eigenvector with $\theta^{(1)}(z)=0$, is also an eigenvector of $\mathfrak R$. To proceed, we make use of conjecture \ref{conj:uniqueness} which claims that the eigenspace of $\theta^{(1)}(z)=0$ is one-dimensional for any $N$ (except for possibly a finite number of values for $q$). Therefore, our (renormalised) Bethe state needs to be an eigenvector of the spin-reversal operator for any $N$. The recurrence relations of proposition \ref{prop:asymptotics} imply that the eigenvalue needs to be the same for all $N$. For $N=1$ we have simply $|\tilde \Psi(w_1)\rangle = |0\rangle$ which is invariant under spin reversal: $\mathfrak R|\tilde \Psi(w_1)\rangle = |\tilde \Psi(w_1)\rangle$. Hence, for arbitrary $N$ we obtain the relation
\begin{equation*}
  \mathfrak R|\tilde \Psi(w_1,\dots,w_N)\rangle = |\tilde \Psi(w_1,\dots,w_N)\rangle,
\end{equation*}
and of course the same equation for $|\Psi(w_1,\dots,w_N)\rangle $. Written out explicitly in terms of the operators $\mathcal B$ and $\mathcal C$ we find therefore
\begin{equation}
  \prod_{j=1}^N \mathcal C(w_j|w_1,\dots,w_N)|\vee\rangle = \prod_{j=1}^N \mathcal B(w_j|w_1,\dots,w_N)|\wedge\rangle.
  \label{eqn:spinreversal}
\end{equation}

Let us now come back to the evaluation of our component through \eqref{eqn:componentfromB}. We use the transposed version of \eqref{eqn:spinreversal} for $n$ sites in order to convert half of the $\mathcal B$-operators into $\mathcal C$'s:
\begin{align*}
  \tilde \Psi_{\Uparrow\dots \Uparrow\Downarrow\cdots \Downarrow}(w_1,\dots, w_{2n})
    =& \frac{\prod_{j=1}^{2n}\prod_{k=1}^n[q^{-1}w_j/w_k]}{([q][q^2])^n\prod_{j<k}^{2n}[q w_j/w_k]}\\
    & \times \langle \wedge |\prod_{j=n+1}^{2n} \mathcal C(w_j|w_{n+1},\dots,w_{2n}) \prod_{j=1}^{n} \mathcal B(w_j|w_{n+1},\dots,w_{2n})|\wedge\rangle.
\end{align*}
We see thus appear a typical scalar product which can be evaluated from \eqref{eqn:scalarprod} for a system of length $n$. Hence, the component can be written in terms of the Izergin-Korepin determinant. Some of the pre-factors cancel out and the final result takes the compact form
\begin{align}
  \tilde \Psi_{\Uparrow\dots \Uparrow\Downarrow\cdots \Downarrow}(w_1,\dots, w_{2n}) =
  \left(\frac{[q]}{[q^2]}\right)^n  \prod_{1\leq j < k \leq n}& \left[\frac{q w_k}{w_j}\right]\prod_{n+1\leq j < k \leq 2n} \left[\frac{q w_k}{w_j}\right] \nonumber \\
    &\times Z_{\text{\rm \tiny IK}}(w_1,\dots,w_n;w_{n+1},\dots, w_{2n}).
    \label{eqn:resulteven}
\end{align}

\paragraph{Odd lengths.} For $N=2n+1$ the component $\tilde \Psi_{\Uparrow \cdots \Uparrow 0 \Downarrow\cdots \Downarrow}(w_1,\dots,w_{2n+1})$ can be obtained from the result at even length, the known degree width of the vectors and the exchange relation \eqref{eqn:exchangephi}. The latter can be projected on spin configurations which allows us to investigate the dependence of the components on their parameters. For instance, it is easily shown that
\begin{align*}
  \left[\frac{q w_{j+1}}{w_{j}}\right] \tilde \Psi_{\cdots \underset{j}{\Uparrow} \Uparrow \cdots}(\dots,w_j,w_{j+1},\dots) =\left[\frac{q w_j}{w_{j+1}}\right] \tilde \Psi_{\cdots \underset{j}{\Uparrow} \Uparrow \cdots}(\dots,w_{j+1},w_j,\dots).\end{align*}
This implies in particular that any component with the pattern $\Uparrow \Uparrow$ at positions $j,j+1$ is proportional to $[qw_j/w_{j+1}]$ times a centred Laurent polynomial which is symmetric under the exchange of $w_j$ and $w_{j+1}$. Another simple consequence of the exchange relation is
\begin{align*}
   \tilde \Psi_{\cdots \underset{j}{\Uparrow} 0 \cdots}(\dots,w_j,w_{j+1},\dots) =\left[\frac{q w_{j+1}}{w_{j}}\right]\tilde \Upsilon_{\cdots \underset{j}{\Uparrow} 0 \cdots}(\dots, w_j,w_{j+1},\dots),
\end{align*}
where $\tilde \Upsilon_{\cdots {\Uparrow} 0 \cdots}(\dots, w_j,w_{j+1},\dots)$ is some centred Laurent polynomial in its arguments. Combining this with the symmetry property of the arguments within a string $\Uparrow \cdots \Uparrow$ allows to conclude that the component $\tilde \Psi_{\Uparrow \cdots \Uparrow 0 \Downarrow\cdots \Downarrow}(w_1,\dots,w_{2n+1})$ is actually proportional to $[q w_{n+1}/w_j]$ for all $j=1, \dots, n$, and hence to their product. A similar relation is found when analysing the dependence on the parameters to the right of the site $n+1$. Factor exhaustion leads thus to the following form for our component:
\begin{align*}
   \tilde \Psi_{\Uparrow\dots \Uparrow 0\Downarrow\cdots \Downarrow}(w_1,\dots, w_{2n}) =
   \prod_{j=1}^n & \left[\frac{q w_{n+1}}{w_j}\right]\prod_{j=n+2}^{2n+1}\left[\frac{q w_{j}}{w_{n+1}}\right] \nonumber \\
   &\times \tilde \Xi(w_1,\dots,w_n,w_{n+1},w_{n+2},\dots, w_{2n+1}).
\end{align*}
Here $\tilde \Xi(w_1,\dots,w_{2n+1})$ is a centred Laurent polynomial which can be easily determined from the recurrence properties of the renormalised eigenvector. Indeed, observe that the prefactor of $\tilde \Xi(w_1,\dots,w_{2n+1})$ in this equation is a centred Laurent polynomial in $w_{n+1}$ of degree width $4n = 2(N-1)$, and hence saturates the degree width according to proposition \ref{prop:degreewidth}. Therefore $\tilde \Xi(w_1,\dots,w_{2n+1})$ cannot depend on $w_{n+1}$, and may be evaluated by sending $w_{n+1}$ to infinity or zero. In these limits, we know the left-hand side from proposition \ref{prop:asymptotics}, and conclude that the unknown function $\tilde \Xi$ is given by the simple component \eqref{eqn:resulteven} for even size $N-1 =2n$ with slightly rearranged arguments. After a short calculation we find our final result for $N=2n+1$ :
\begin{align}
   \tilde \Psi_{\Uparrow\dots \Uparrow 0\Downarrow\cdots \Downarrow}(w_1,\dots, w_{2n+1}) =
   \prod_{j=1}^n & \left[\frac{q w_{n+1}}{w_j}\right]\prod_{j=n+2}^{2n+1}\left[\frac{q w_{j}}{w_{n+1}}\right] \nonumber \\
   &\times \tilde \Psi_{\Uparrow\dots \Uparrow \Downarrow\cdots \Downarrow}(w_1,\dots,w_n,w_{n+2},\dots, w_{2n+1}).
   \label{eqn:resultodd}
\end{align}

\subsection{The homogeneous point}
\label{sec:homogeneous}
In this section, we consider the homogeneous point $w_1=\dots=w_N=1$. In this case, 
the vector $|\tilde\Psi(1,\dots,1)\rangle$ is an eigenstate of the spin-chain 
Hamiltonian with eigenvalue zero in the pseudo-momentum sector where $S'\equiv (-1)^{N+1}$, and thus a supersymmetry singlet. Our aim is to 
obtain the expressions for the simple components \eqref{eqn:simplecomponents} and the sum rule \eqref{eqn:sumrule}. To this end, we need to 
control the normalisation of the vector in the homogeneous case. We show here that 
there is an redundant overall factor for all components when all $w$'s are equal. The case of $N=3$ sites provides a simple illustration :
\begin{align*}
  &\tilde \Psi_{\Uparrow 0 \Downarrow}(1,1,1)=\tilde \Psi_{\Downarrow 0 \Uparrow}
  (1,1,1)=[q]^3 \times 1, \quad \tilde \Psi_{000}(1,1,1)=[q]^3\times x, \quad x =q
+q^{-1}.
\end{align*}
If we remove the factor $[q]^3$ then we recover the components of the eigenvector of 
the Hamiltonian $|\Phi(x)\rangle$ for three sites from section 
\ref{sec:defspinchain}. This can be done systematically for all $N$ as we show 
hereafter. After this, we discuss the results obtained in sections \ref{sec:scalar} and \ref{sec:simplecomponents}, and relate the supersymmetry singlet and enumeration problems of alternating sign matrices.

\paragraph{Limit of the renormalised Bethe vector.}  From its definition we find that 
the homogeneous limit of the renormalised Bethe vector is given by
  \begin{equation}
    |\tilde \Psi(1,\dots,1)\rangle =[q]^{N(N-1)/2}{x^{-N/2}}\beta(x)^N |{\wedge}
\rangle, \quad x=q+q^{-1},
    \label{eqn:hompsitilde}
  \end{equation}
  where $\beta(x) = \mathcal B(1|1,\dots,1)/[q]^N$. More explicitly, we have
  \begin{subequations}
  \begin{equation}
    \beta(x) = {}_{a}\langle{\uparrow}|\rho_{a,N}(x) \cdots \rho_{a,1}(x)|{\downarrow}\rangle_a, \quad 
\rho_{a,j}(x) = [q]^{-1}R^{(1,2)}_{a,j}(q^{-1}).
  \end{equation}
  In order to show that this depends only on $x$ it is sufficient to use the 
definition \eqref{eqn:r12}: we find 
  \begin{equation}
    \rho(x)   =\left(
  \begin{array}{cccccc}
  1 & 0 & 0 & 0 & 0 & 0\\
  0 & 0 & 0 & x^{1/2} & 0 & 0\\
  0 & 0 & -1 & 0 & x^{1/2} & 0\\
    0 & x^{1/2} & 0 & -1 & 0 & 0\\
  0 & 0 & x^{1/2} & 0 & 0 & 0 \\
  0 & 0 & 0 & 0 & 0 & 1
  \end{array}
  \right).
  \end{equation}
  \label{eqn:constructionbeta}%
  \end{subequations}
It is easily seen that when acting on a state $|\epsilon\rangle \otimes |\sigma
\rangle$ with $\epsilon = \uparrow,\downarrow$ and $\sigma=\Uparrow,0,\Downarrow$ 
that a spin flip of any type is weighted by $x^{1/2}$. For example $\rho(x)\left(|
{\uparrow}\rangle \otimes |{\Downarrow}\rangle\right) = -|{\uparrow}\rangle \otimes 
|{\Downarrow}\rangle+ x^{1/2}|{\downarrow}\rangle\otimes |0\rangle$.
The way the operator $\beta(x)$ flips spins is rather non-local. Yet it is quite straightforward to determine certain properties of its action 
on simple basis vectors of $V$:
\begin{lemma}
  Let $|\sigma_1,\dots, \sigma_N\rangle$ be a basis vector of $V$, then the state $
\beta(x)|\sigma_1,\dots, \sigma_N\rangle$ is $x^{1/2}$ times a polynomial in $x$ 
with integer coefficients.
   \label{lemma:actionbeta}
 \end{lemma}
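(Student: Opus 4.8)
The plan is to exploit the very rigid structure of the matrix $\rho(x)$ in \eqref{eqn:constructionbeta}: write $\rho(x) = \rho_0 + x^{1/2}\rho_1$, where $\rho_0$ collects the matrix entries in $\{-1,0,1\}$ that leave the auxiliary spin $\epsilon=\,\uparrow,\downarrow$ unchanged, and $\rho_1$ collects the entries equal to $x^{1/2}$, which flip $\epsilon$. Both $\rho_0$ and $\rho_1$ have integer entries; moreover $\rho_0$ is block-diagonal with respect to the auxiliary space (it commutes with the auxiliary Pauli matrix $\sigma^3_a = \mathrm{diag}(1,-1)$ acting on $\mathbb C^2$), while $\rho_1$ is block off-diagonal, i.e. $\sigma^3_a\rho_1\sigma^3_a = -\rho_1$.

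First I would establish integrality and polynomiality. Since $\beta(x) = {}_{a}\langle{\uparrow}|\rho_{a,N}(x)\cdots\rho_{a,1}(x)|{\downarrow}\rangle_a$ and every factor $\rho_{a,j}(x) = \rho_{0;a,j} + x^{1/2}\rho_{1;a,j}$ has entries in $\mathbb Z[x^{1/2}]$, expanding the product and contracting the auxiliary indices shows at once that, for any basis vector $|\sigma_1,\dots,\sigma_N\rangle$ of $V$, every component of $\beta(x)|\sigma_1,\dots,\sigma_N\rangle$ is a polynomial in $x^{1/2}$ with integer coefficients: there are only finitely many terms, each one a product of integer matrix entries times a power of $x^{1/2}$.

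Next comes the parity argument, which is the crux. Conjugating each factor by $\sigma^3_a$ sends $\rho_{a,j}(x)$ to $\rho_{0;a,j} - x^{1/2}\rho_{1;a,j}$, i.e. to $\rho_{a,j}(x)$ with $x^{1/2}$ replaced by $-x^{1/2}$. Inserting $\sigma^3_a\sigma^3_a = 1$ between consecutive factors of the product and using ${}_{a}\langle{\uparrow}|\sigma^3_a = {}_{a}\langle{\uparrow}|$ together with $\sigma^3_a|{\downarrow}\rangle_a = -|{\downarrow}\rangle_a$ then telescopes to the identity $\beta(x)\big|_{x^{1/2}\to -x^{1/2}} = -\beta(x)$ as operators on $V$. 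Equivalently, in the ``path'' picture obtained by expanding over the auxiliary-spin profile $\epsilon_0=\,\downarrow,\epsilon_1,\dots,\epsilon_{N-1},\epsilon_N=\,\uparrow$ along the chain, each nonzero contribution carries exactly one factor $x^{1/2}$ for every flip $\epsilon_{j-1}\neq\epsilon_j$, and the number of such flips is necessarily odd because the profile starts at $\downarrow$ and ends at $\uparrow$.

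Finally I would combine the two facts. A polynomial $p(y)\in\mathbb Z[y]$ with $p(-y)=-p(y)$ has only odd-degree terms, hence factors as $p(y)=y\,r(y^2)$ with $r\in\mathbb Z[t]$. Applying this componentwise with $y=x^{1/2}$ to $\beta(x)|\sigma_1,\dots,\sigma_N\rangle$ yields the claim: it equals $x^{1/2}$ times a vector all of whose components are polynomials in $x$ with integer coefficients. I do not anticipate a genuine obstacle; the only point requiring care is the bookkeeping that pins down $\rho_0$ and $\rho_1$ correctly from \eqref{eqn:constructionbeta} and the sign in the $\sigma^3_a$-conjugation (equivalently, the odd domain-wall count), everything else being a finite, routine expansion.
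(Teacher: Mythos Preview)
Your proof is correct and takes a cleaner, more algebraic route than the paper's. The paper argues combinatorially on the \emph{physical} spins: it counts the number $k$ of raising flips among $\sigma_1,\dots,\sigma_N$, observes that the overall magnetisation drop forces exactly $k+1$ lowering flips, and concludes that each contribution carries weight $x^{(2k+1)/2}=x^{1/2}x^k$. Your argument instead tracks the \emph{auxiliary} spin: the decomposition $\rho(x)=\rho_0+x^{1/2}\rho_1$ with $\rho_0$ block-diagonal and $\rho_1$ block off-diagonal in the auxiliary space, together with the domain-wall boundary ${}_a\langle\uparrow|\cdots|\downarrow\rangle_a$, yields immediately via the $\sigma^3_a$-conjugation that $\beta(x)$ is odd under $x^{1/2}\to -x^{1/2}$. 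The two counts are in fact the same --- an auxiliary flip occurs precisely when a physical flip does --- but your packaging as a $\mathbb Z_2$-parity/conjugation identity avoids the subsegment bookkeeping of the paper and makes the integrality and the factor $x^{1/2}$ fall out in one line. The paper's version, on the other hand, gives a slightly more explicit picture of \emph{which} terms contribute at each power of $x$, which may be useful if one later wants to refine the statement.
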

 \begin{proof}
   The action of the operator $\beta(x)$ onto the basis state leads to a 
superposition of states where the spin components are flipped. From its construction 
\eqref{eqn:constructionbeta}, it is evident that any spin flip is weighted by 
$x^{1/2}$. A given spin component $\sigma_j$ may either be lowered if $\sigma_j=
\Uparrow,0$ or raised in case of $\sigma_j=0,\Downarrow$. Let us now consider those 
resulting vectors where exactly $k=0,1,\dots$ components are raised. This comes with 
a weight $x^{k/2}$. Moreover, their positions divide the chain into $k+1$ 
subsegments of length at least one whose horizontal ends have exactly the same 
boundary conditions as the operator $\mathcal B$ or $\beta$. This can easily be seen  
graphically by employing \eqref{eqn:Bgraphical}: an example with $k=2$ is shown in 
figure \ref{fig:tworeverseflips}.
   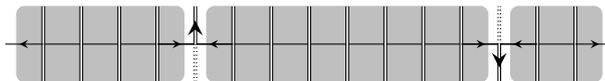
\begin{figure}[h]
     \centering
     \begin{tikzpicture}
       \draw[rounded corners=3pt, color=lightgray, fill=lightgray] (-3.85,-.5) 
rectangle(-1.65,.5);
       \draw[rounded corners=3pt, color=lightgray, fill=lightgray] (-1.35,-.5) 
rectangle(2.35,.5);
       \draw[rounded corners=3pt, color=lightgray, fill=lightgray] (2.65,-.5) 
rectangle(3.85,.5);

       \draw (-3.5,0) -- (3.5,0);
       \foreach \x in {-3.5,-3,...,-2,-1,-.5,...,2,3,3.5}
         \draw[double,fill=lightgray] (\x,-.5) -- (\x,.5);
         \draw[postaction={on each segment={mid arrow}}](-3.5,0) --(-4,0);
         \draw[postaction={on each segment={mid arrow}}](-2.,0) --(-1.5,0);
         \draw[postaction={on each segment={mid arrow}}](-1,0) --(-1.5,0);
         \draw[postaction={on each segment={mid arrow}}](2.,0) --(2.5,0);
         \draw[postaction={on each segment={mid arrow}}](3,0) --(2.5,0);
         \draw[postaction={on each segment={mid arrow}}](3.5,0) --(4,0);

       \draw[densely dotted,double] (-1.5,-.5)--(-1.5,0);
       \draw[postaction={on each segment={mid arrow}},double] (-1.5,0)--(-1.5,.5);  
       
       \draw[densely dotted,double] (2.5,0)--(2.5,.5);
       \draw[postaction={on each segment={mid arrow}},double] (2.5,0)--(2.5,-.5);

     \end{tikzpicture}
     \caption{Example for $k=2$ raising spin flips and the subdivision into $k+1 =3$ 
segments.}
     \label{fig:tworeverseflips}
   \end{figure}   
   As $\beta(x)$ decreases by construction the magnetisation of the state by one, we 
conclude that one these subsegments exactly $k+1$ spin components need to lowered. 
These spin flips come with a weight $x^{(k+1)/2}$. Therefore we obtain the total 
weight $x^{k/2}\times x^{(k+1)/2} = x^{1/2} \times x^{k}$ where $k$ is an integer. 
Taking into account all possible configurations, we conclude that the overall result 
is necessarily $x^{1/2}$ times a polynomial. The statement that it has integer 
coefficients follows from the fact that $\rho(x)$ is a linear function of $x^{1/2}$ 
whose coefficients are matrices with integer entries. \qed 
 \end{proof}
 
Next, we apply the operator $\beta(x)$ exactly $N$ times to the reference state. 
Using the preceding lemma we obtain immediately the following:
 \begin{proposition}
   The supersymmetry singlet
   \begin{equation}
     |\Phi(x)\rangle = x^{-N/2}\beta(x)^N|{\wedge}\rangle = [q]^{-N(N-1)/2}|\tilde \Psi(1,\dots,1)\rangle
     \label{eqn:defphi}
   \end{equation}
   is a polynomial in $x=q+q^{-1}$ with integer coefficients.
   \label{prop:poly}
\end{proposition}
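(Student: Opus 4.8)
The plan is to prove the statement by an induction that invokes Lemma~\ref{lemma:actionbeta} as its only ingredient, keeping careful track of the power of $x^{1/2}$ produced at each step. I would begin with the trivial base case: the Bethe reference state $|{\wedge}\rangle = |{\Uparrow\cdots\Uparrow}\rangle$ is itself a basis vector of $V$, hence it equals $x^{0}$ times a vector all of whose components are polynomials in $x$ with integer coefficients.

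For the inductive step, suppose that $\beta(x)^{k}|{\wedge}\rangle = x^{k/2}\,v_k(x)$, where $v_k(x) = \sum_{\sigma}c_\sigma^{(k)}(x)\,|\sigma_1,\dots,\sigma_N\rangle$ is a linear combination of basis vectors with coefficients $c_\sigma^{(k)}(x)\in\mathbb Z[x]$. Applying $\beta(x)$ once more and using linearity, $\beta(x)^{k+1}|{\wedge}\rangle = x^{k/2}\sum_{\sigma}c_\sigma^{(k)}(x)\,\beta(x)|\sigma_1,\dots,\sigma_N\rangle$. By Lemma~\ref{lemma:actionbeta} each $\beta(x)|\sigma_1,\dots,\sigma_N\rangle$ is $x^{1/2}$ times a vector whose components lie in $\mathbb Z[x]$; since $\mathbb Z[x]$ is closed under addition and multiplication, the whole sum is $x^{(k+1)/2}$ times a vector of the same kind, which is precisely the claim at level $k+1$.

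Taking $k=N$ yields $\beta(x)^{N}|{\wedge}\rangle = x^{N/2}\,v_N(x)$ with $v_N(x)$ a vector whose components belong to $\mathbb Z[x]$, so that $|\Phi(x)\rangle = x^{-N/2}\beta(x)^{N}|{\wedge}\rangle = v_N(x)$ is a polynomial in $x$ with integer coefficients. The second equality in \eqref{eqn:defphi}, i.e.\ $|\Phi(x)\rangle = [q]^{-N(N-1)/2}|\tilde\Psi(1,\dots,1)\rangle$, is nothing but a restatement of \eqref{eqn:hompsitilde}, so it requires no further argument.

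The only place where something genuinely has to be checked is the bookkeeping of the half-integer powers of $x^{1/2}$: a priori a single action of $\beta(x)$ could contribute $x^{m/2}$ with various $m$, depending on how many spin components it raises versus lowers, which would spoil the clean cancellation against the prefactor $x^{-N/2}$. This is exactly what Lemma~\ref{lemma:actionbeta} rules out --- the balance between raised and lowered components across the $k+1$ subsegments forces the net contribution to be exactly $x^{1/2}$ at every application --- so once that lemma is available the present proposition follows immediately, and I do not anticipate any obstacle beyond invoking it correctly.
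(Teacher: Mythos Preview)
Your argument is correct and follows exactly the route the paper takes: the paper simply states that applying $\beta(x)$ to the reference state $N$ times and invoking Lemma~\ref{lemma:actionbeta} gives the result immediately, and your induction is just the explicit unpacking of that one-line justification. The observation that the second equality in \eqref{eqn:defphi} is a restatement of \eqref{eqn:hompsitilde} is also right.
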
   

 \paragraph{Simple components and square norm.} We would 
like to evaluate the square norm $Z(w_1,\dots,w_N)$ and the components $\tilde \Psi_{\Uparrow \cdots \Uparrow\Downarrow\cdots \Downarrow}(w_1,\dots,w_{2n})$ or $\tilde \Psi_{\Uparrow \cdots \Uparrow0\Downarrow\cdots \Downarrow}(w_1,\dots,w_{2n+1})$ in the case $w_1=
\dots=w_N=1$. From our results in the preceding sections we see that this amounts to 
analyse the partition function of the six-vertex model with domain wall boundary 
conditions in the homogeneous limit. Its relation to weighted enumeration of alternating sign matrices is well known (see e.g. \cite{behrend:12}).

The relation is established as follows. In the homogeneous case the vertex weights become $\mathfrak a(1) 
= \mathfrak b(1) = [q]$ for the first four vertex configurations, and $\mathfrak 
c(1) = [q^2]$ for the fifth and sixth vertex configuration shown in figure \ref{fig:vertexentries} here below.
 Let us evaluate the weight of a single configuration on an $N\times N$ square with 
say $k$ vertices of type six.
It is not difficult to see that the boundary conditions imply that there are thus $N
+k$ vertices of type five, and
$N^2 - N - 2k$ vertices of the other types. We conclude that the weight of the 
configuration is therefore
\begin{equation*}
   [q]^{N^2 - N - 2k}[q^2]^{N+k} [q^2]^{k} = [q]^{N(N-1)}[q^2]^{N}(x^2)^k, \quad x 
=q+q^{-1}.
\end{equation*}
It is known that any such configuration is in one-to-one correspondence with an $N 
\times N$ alternating sign matrix containing exactly $k$ entries $-1$ \cite{elkies:92}. The correspondence between vertex configurations and matrix entries is shown in 
figure \ref{fig:vertexentries}, and we give an example in figure \ref{fig:dwbc}.
\begin{figure}[h]
  \centering
  \begin{tikzpicture}
    \drawvertex{1}{0}{0}
    \drawvertex{2}{1.25}{0}
    \drawvertex{3}{3.25}{0}
    \drawvertex{4}{4.5}{0}
    \drawvertex{5}{6.5}{0}
    \drawvertex{6}{7.75}{0}
    
    \draw (.5,-1) node {$0$};
    \draw (1.75,-1) node {$0$};
    \draw (3.75,-1) node {$0$};
    \draw (5,-1) node {$0$};
     \draw (7,-1) node {$1$};
    \draw (8.25,-1) node {$-1$};
 \end{tikzpicture}
 \caption{Correspondence between vertex configurations and the entries of alternating sign matrices.}
 \label{fig:vertexentries}
\end{figure}
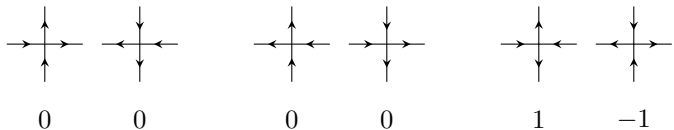
Therefore, in the homogeneous limit the Izergin-Korepin determinant reduces to a sum 
over all $N\times N$ alternating sign matrices, each of which is weighted by the 
weight $[q]^{N(N-1)}[q^2]^{N}(x^2)^k$ where $k$ is the number of minus signs. This 
leads to the type of weighted enumeration which we discussed in section 
\ref{sec:defspinchain}. In terms of the generating function $A_N(t)$ one finds therefore the partition function
\begin{equation*}
  Z_{\text{\rm \tiny IK}}(w_1=1,\dots,w_N=1) = [q]^{N(N-1)}[q^2]^{N} A_N(t=x^2).
\end{equation*}

This relation can now be applied to the vector $|\Phi(x)\rangle$ defined in \eqref{eqn:defphi}: sending all inhomogeneity parameters to one in \eqref{eqn:resulteven} and \eqref{eqn:resultodd} we find
\begin{equation*}
    \Phi_{\underset{n}{\underbrace{\Uparrow\cdots\Uparrow}}\underset{n}{\underbrace{\Downarrow\cdots\Downarrow}}}(x) = \Phi_{\underset{n}{ \underbrace{\Uparrow\cdots\Uparrow}}0\underset{n}{\underbrace{\Downarrow\cdots\Downarrow}}}(x) = A_n(x^2),
  \end{equation*}
which is precisely \eqref{eqn:simplecomponents}. In fact, the constant term of these polynomials equals $n!$, because any alternating sign matrix without minus signs is necessarily a permutation matrix. Moreover, the degree of the polynomial is $\lfloor (n-1)^2/4\rfloor$. Hence our normalisation criteria for the components from section \ref{sec:defspinchain} are fulfilled, and the state $|\Phi(x)\rangle$ is indeed a polynomial in $x$ with integer coefficients as established in proposition \ref{prop:poly}. 
Eventually, taking the homogeneous limit in proposition \ref{prop:partfunc}, we obtain
\begin{equation*}
  ||\Phi(x)||^2 = A_N(x^2)
\end{equation*}
which is the sum rule stated in \eqref{eqn:sumrule}. 
\section{Conclusion}
\label{sec:concl}
In this article we showed that for a particular twisted boundary condition the 
transfer matrix of the inhomogeneous nineteen-vertex model possesses a simple 
eigenvalue with a quite non-trivial corresponding eigenvector. At the homogeneous 
point, it is a supersymmetry singlet of the spin-one XXZ Hamiltonian, and its 
square norm as well as a particular component are given by generating functions for a type of weighted enumeration 
of alternating sign matrices, the weight being related to the anisotropy of the spin 
chain.

There are various open problems and generalisations. First of all, it might be interesting to characterise explicitly all components of the 
supersymmetry singlet in terms of combinatorial quantities, as was done for the spin-$1/2$ XXZ Hamiltonian with $
\Delta=-1/2$ where a relation to certain refined enumerations of alternating sign 
matrices was found \cite{razumov:06,razumov:07}. For instance, in \cite{hagendorf:13} certain components of the singlet where conjectured to be given by generating 
functions for weighted enumeration of vertically-symmetric alternating sign 
matrices. Second, one 
may ask if there are other boundary conditions/twists which allow for simple 
eigenvalues of the transfer matrix. The 
answer is affirmative: in fact, one may study systematically the boundary conditions 
which are compatible with dynamic lattice supersymmetry and admit supersymmetry 
singlets \cite{hagendorf:14susybc}, and use them in order to determine boundary 
conditions for which the transfer matrices of the corresponding inhomogeneous vertex 
model are expected to have a simple eigenvalue. Indeed, the nineteen-vertex model 
admits a non-diagonal twist which has all these features. This will be 
discussed in a forthcoming paper \cite{hagendorf:14spinreversal}. Third, the lattice 
supersymmetry of the spin-one XXZ chain survives the deformation to the quantum 
integrable spin-one XYZ chain \cite{hagendorf:13}. We expect the transfer matrix of 
the corresponding vertex model to have a simple eigenvalue 
but very non-trivial eigenvector, too, similarly to the case of the eight-vertex model along a 
particular line of couplings \cite{zinnjustin:13}. Eventually, it would be 
interesting to 
understand the role which dynamic lattice supersymmetry plays in all this. While 
this type of supersymmetry is present in quite a few integrable spin chains, its interplay 
with integrability is not very well understood (see for instance the discussion in 
\cite{meidinger:13}), not to mention the question if there 
is a general underlying structure in the case of inhomogeneous models which reduces 
to the lattice supersymmetry when approaching the homogeneous point.

\subsection*{Acknowledgements} This work is supported by the Belgian Interuniversity 
Attraction Poles Program P7/18 through the network DYGEST (Dynamics, Geometry and 
Statistical Physics). The author would like to thank Luigi Cantini, Anita Ponsaing, Paul Zinn-Justin for discussions.

\end{document}